\documentclass[submission, Phys]{SciPost}

\usepackage{amsthm}
\usepackage{mathtools}
\usepackage{physics}

\usepackage[capitalise]{cleveref}

\theoremstyle{plain}
\newtheorem{lemma}{Lemma}
\newtheorem{theorem}[lemma]{Theorem}

\newtheorem{proposition}[lemma]{Proposition}
\theoremstyle{definition}
\newtheorem{definition}[lemma]{Definition}

\newtheorem{remark}[lemma]{Remark}

\usepackage{amsfonts}
\usepackage{dsfont}
\usepackage{bbm}

\newcommand{\mcl}{\mathcal}

\newcommand{\Id}{\ensuremath{\mathbbm{1}}}

\newcommand{\hs}[1][H]{\mcl #1}
\newcommand{\grph}[1][G]{\mathfrak #1}

\newcommand{\field}[1]{\ensuremath{\mathds{#1}}}
\newcommand{\R}{\field{R}}
\newcommand{\C}{\field{C}}
\newcommand{\N}{\field{N}}
\newcommand{\Z}{\field{Z}}

\DeclareMathOperator{\bond}{bond}
\DeclareMathOperator{\borderbond}{\underline{bond}}

\DeclareMathOperator*{\linspan}{span}

\newcommand{\degengeq}{\unrhd}

\DeclarePairedDelimiter{\ceil}{\lceil}{\rceil}

\usepackage{tikz}
\usepackage{standalone}
\usepackage{tikzscale}

\makeatletter
\tikzoption{canvas is xy plane at z}[]{%
  \def\tikz@plane@origin{\pgfpointxyz{0}{0}{#1}}%
  \def\tikz@plane@x{\pgfpointxyz{1}{0}{#1}}%
  \def\tikz@plane@y{\pgfpointxyz{0}{1}{#1}}%
  \tikz@canvas@is@plane
}
\makeatother

\usepackage{xifthen}
\newcommand{\conditionalparentesis}[1][]{%
  \ifthenelse{\isempty{#1}}{}{\!\qty(#1)}
}

\newcommand{\GHZ}[2][]{\operatorname{GHZ}_{#2}\conditionalparentesis[#1]}
\newcommand{\MAMU}[2][]{\operatorname{MaMu}_{#2}\conditionalparentesis[#1]}

\newcommand{\tensor}[1]{\ensuremath{\left\langle #1 \right\rangle}}
\newcommand{\innerprod}[2]{\left\langle #1,\, #2 \right\rangle}
\renewcommand{\expval}[2]{\innerprod{#2}{{#1}{#2}}}

\newcommand{\ie}{i.\,e.\ }
\newcommand{\eg}{e.\,g.\ }

\newcommand{\half}{\frac{1}{2}}

\theoremstyle{plain}
\newtheorem*{main-result}{Theorem}

\usetikzlibrary{external}

\tikzexternalize[prefix=TikzPictures/]
\tikzexternaldisable

\usetikzlibrary{positioning,shapes.geometric,3d, calc, arrows,fit,decorations.pathreplacing}
\tikzset{baseline={([yshift=-2]current bounding box.center)}}

\makeatletter
\tikzoption{canvas is xy plane at z}[]{%
  \def\tikz@plane@origin{\pgfpointxyz{0}{0}{#1}}%
  \def\tikz@plane@x{\pgfpointxyz{1}{0}{#1}}%
  \def\tikz@plane@y{\pgfpointxyz{0}{1}{#1}}%
  \tikz@canvas@is@plane
}
\makeatother

\tikzset{pics/GHZZZD/.style n args={1}{
code={%
\node[regular polygon, rotate=30,regular polygon sides=3, draw, rounded corners=0.08cm,
inner sep=0.1cm,fill=black!20] at (0,0) (-triangle){};
\foreach \x in {1,2,3}
  \fill (-triangle.corner \x) circle[radius=0.027cm];

  \node[scale=0.5] at (0,0) {\ensuremath{#1}};

}},
pics/GHZZZZD/.style n args={1}{
code={%
\node[regular polygon, rotate=0,regular polygon sides=4, draw, rounded corners=0.11cm,
inner sep=0.145cm,fill=black!20] at (0,0) (-square){};
\foreach \x in {1,2,3,4}
  \fill (-square.corner \x) circle[radius=0.027cm];

  \node[scale=0.5] at (0,0) {\ensuremath{#1}};

}}
}

\tikzset{pics/GHZZZ/.style n args={1}{
code={%
\node[regular polygon, rotate=30,regular polygon sides=3, draw, rounded corners=0.08cm,
inner sep=0.1cm,fill=black!20] at (0,0) (-triangle){};
\foreach \x in {1,2,3}
  \fill (-triangle.corner \x) circle[radius=0.027cm];

  \node[scale=0.8] at (0,0) {\ensuremath{#1}};

}},
pics/GHZZZZ/.style n args={1}{
code={%
\node[regular polygon, rotate=0,regular polygon sides=4, draw, rounded corners=0.11cm,
inner sep=0.145cm,fill=black!20] at (0,0) (-square){};
\foreach \x in {1,2,3,4}
  \fill (-square.corner \x) circle[radius=0.027cm];

  \node[scale=0.8] at (0,0) {\ensuremath{#1}};

}},
pics/MamuZZZNDNUM/.style n args={3}{
code={%
 \node[regular polygon, regular polygon sides=3,
inner sep=0.095cm,rotate=30] at (0,0) (-tr2){};

\coordinate (-pt11) at ($(-tr2.corner 1)+0.06*(-tr2.corner 2)-0.06*(-tr2.corner 3)$){};
\coordinate (-pt12) at ($(-tr2.corner 1)-0.06*(-tr2.corner 2)+0.06*(-tr2.corner 3)$){};

\coordinate (-pt21) at ($(-tr2.corner 2)+0.06*(-tr2.corner 3)-0.06*(-tr2.corner 1)$){};
\coordinate (-pt22) at ($(-tr2.corner 2)-0.06*(-tr2.corner 3)+0.06*(-tr2.corner 1)$){};

\coordinate (-pt31) at ($(-tr2.corner 3)+0.06*(-tr2.corner 2)-0.06*(-tr2.corner 1)$){};
\coordinate (-pt32) at ($(-tr2.corner 3)-0.06*(-tr2.corner 2)+0.06*(-tr2.corner 1)$){};

\fill(-pt11) []circle[radius=0.025cm];
\fill(-pt12) []circle[radius=0.025cm];
\fill(-pt21) []circle[radius=0.025cm];
\fill(-pt22) []circle[radius=0.025cm];
\fill(-pt31) []circle[radius=0.025cm];
\fill(-pt32) []circle[radius=0.025cm];

\draw[thick](-pt11)--(-pt22)node[midway, left,xshift=0.11cm] {#1};;
\draw[thick](-pt12)--(-pt32)node[midway, above,yshift=-0.11cm,xshift=0.09cm] {#2};
\draw[thick](-pt21)--(-pt31)node[midway,xshift=-0.17cm,yshift=-0.12cm] {#3};;
}},
pics/MamuZZZND/.style n args={3}{
code={%
 \node[regular polygon, regular polygon sides=3,
inner sep=0.095cm,rotate=30] at (0,0) (-tr2){};

\coordinate (-pt11) at ($(-tr2.corner 1)+0.06*(-tr2.corner 2)-0.06*(-tr2.corner 3)$){};
\coordinate (-pt12) at ($(-tr2.corner 1)-0.06*(-tr2.corner 2)+0.06*(-tr2.corner 3)$){};

\coordinate (-pt21) at ($(-tr2.corner 2)+0.06*(-tr2.corner 3)-0.06*(-tr2.corner 1)$){};
\coordinate (-pt22) at ($(-tr2.corner 2)-0.06*(-tr2.corner 3)+0.06*(-tr2.corner 1)$){};

\coordinate (-pt31) at ($(-tr2.corner 3)+0.06*(-tr2.corner 2)-0.06*(-tr2.corner 1)$){};
\coordinate (-pt32) at ($(-tr2.corner 3)-0.06*(-tr2.corner 2)+0.06*(-tr2.corner 1)$){};

\fill(-pt11) []circle[radius=0.025cm];
\fill(-pt12) []circle[radius=0.025cm];
\fill(-pt21) []circle[radius=0.025cm];
\fill(-pt22) []circle[radius=0.025cm];
\fill(-pt31) []circle[radius=0.025cm];
\fill(-pt32) []circle[radius=0.025cm];

\draw[thick](-pt11)--(-pt22)node[scale=0.8, midway, left,xshift=0.08cm] {\ensuremath{#1}};
\draw[thick](-pt12)--(-pt32)node[scale=0.8, midway, above,yshift=-0.08cm,xshift=0.08cm] {\ensuremath{#2}};
\draw[thick](-pt21)--(-pt31)node[scale=0.8, midway, below,yshift=0.08cm, xshift=0.08cm] {\ensuremath{#3}};
}},
pics/MamuZZZNDsm/.style n args={3}{
code={%
 \node[regular polygon, regular polygon sides=3,
inner sep=0.095cm,rotate=30] at (0,0) (-tr2){};

\coordinate (-pt11) at ($(-tr2.corner 1)+0.06*(-tr2.corner 2)-0.06*(-tr2.corner 3)$){};
\coordinate (-pt12) at ($(-tr2.corner 1)-0.06*(-tr2.corner 2)+0.06*(-tr2.corner 3)$){};

\coordinate (-pt21) at ($(-tr2.corner 2)+0.06*(-tr2.corner 3)-0.06*(-tr2.corner 1)$){};
\coordinate (-pt22) at ($(-tr2.corner 2)-0.06*(-tr2.corner 3)+0.06*(-tr2.corner 1)$){};

\coordinate (-pt31) at ($(-tr2.corner 3)+0.06*(-tr2.corner 2)-0.06*(-tr2.corner 1)$){};
\coordinate (-pt32) at ($(-tr2.corner 3)-0.06*(-tr2.corner 2)+0.06*(-tr2.corner 1)$){};

\fill(-pt11) []circle[radius=0.025cm];
\fill(-pt12) []circle[radius=0.025cm];
\fill(-pt21) []circle[radius=0.025cm];
\fill(-pt22) []circle[radius=0.025cm];
\fill(-pt31) []circle[radius=0.025cm];
\fill(-pt32) []circle[radius=0.025cm];

\draw[thick](-pt11)--(-pt22)node[scale=0.4, midway, left,xshift=0.11cm] {\ensuremath{#1}};
\draw[thick](-pt12)--(-pt32)node[scale=0.4, midway, above,yshift=-0.08cm,xshift=0.08cm] {\ensuremath{#2}};
\draw[thick](-pt21)--(-pt31)node[scale=0.4, midway, below,yshift=0.08cm, xshift=0.08cm] {\ensuremath{#3}};
}},
pics/MamuZZZZNDsm/.style n args={4}{
code={%
 \node[regular polygon, regular polygon sides=4,
inner sep=0.145cm,rotate=0] at (0,0) (-tr2){};

\coordinate (-pt11) at ($(-tr2.corner 1)+0.06*(-tr2.corner 2)-0.06*(-tr2.corner 4)$){};
\coordinate (-pt12) at ($(-tr2.corner 1)-0.06*(-tr2.corner 2)+0.06*(-tr2.corner 4)$){};

\coordinate (-pt21) at ($(-tr2.corner 2)+0.06*(-tr2.corner 3)-0.06*(-tr2.corner 1)$){};
\coordinate (-pt22) at ($(-tr2.corner 2)-0.06*(-tr2.corner 3)+0.06*(-tr2.corner 1)$){};

\coordinate (-pt31) at ($(-tr2.corner 3)+0.06*(-tr2.corner 2)-0.06*(-tr2.corner 4)$){};
\coordinate (-pt32) at ($(-tr2.corner 3)-0.06*(-tr2.corner 2)+0.06*(-tr2.corner 4)$){};

\coordinate (-pt41) at ($(-tr2.corner 4)+0.06*(-tr2.corner 3)-0.06*(-tr2.corner 1)$){};
\coordinate (-pt42) at ($(-tr2.corner 4)-0.06*(-tr2.corner 3)+0.06*(-tr2.corner 1)$){};

\fill(-pt11) []circle[radius=0.025cm];
\fill(-pt12) []circle[radius=0.025cm];
\fill(-pt21) []circle[radius=0.025cm];
\fill(-pt22) []circle[radius=0.025cm];
\fill(-pt31) []circle[radius=0.025cm];
\fill(-pt32) []circle[radius=0.025cm];
\fill(-pt41) []circle[radius=0.025cm];
\fill(-pt42) []circle[radius=0.025cm];

\draw[semithick](-pt11) -- (-pt22)node[scale=0.4, midway, above, yshift=-0.09cm] {\ensuremath{#1}};
\draw[semithick](-pt41)--(-pt32)node[scale=0.4, midway, below ,yshift=0.08cm] {\ensuremath{#3}};
\draw[semithick](-pt21)--(-pt31)node[scale=0.4, midway, left,xshift=0.1cm, ] {\ensuremath{#2}};
\draw[semithick](-pt12)--(-pt42)node[scale=0.4, midway, right, xshift=-0.09cm] {\ensuremath{#4}};
}}
}

\tikzset{pics/MamuZZZZNDsmm/.style n args={4}{
code={%
 \node[regular polygon, regular polygon sides=4,
inner sep=0.145cm,rotate=0] at (0,0) (-tr2){};

\coordinate (-pt11) at ($(-tr2.corner 1)+0.06*(-tr2.corner 2)-0.06*(-tr2.corner 4)$){};
\coordinate (-pt12) at ($(-tr2.corner 1)-0.06*(-tr2.corner 2)+0.06*(-tr2.corner 4)$){};

\coordinate (-pt21) at ($(-tr2.corner 2)+0.06*(-tr2.corner 3)-0.06*(-tr2.corner 1)$){};
\coordinate (-pt22) at ($(-tr2.corner 2)-0.06*(-tr2.corner 3)+0.06*(-tr2.corner 1)$){};

\coordinate (-pt31) at ($(-tr2.corner 3)+0.06*(-tr2.corner 2)-0.06*(-tr2.corner 4)$){};
\coordinate (-pt32) at ($(-tr2.corner 3)-0.06*(-tr2.corner 2)+0.06*(-tr2.corner 4)$){};

\coordinate (-pt41) at ($(-tr2.corner 4)+0.06*(-tr2.corner 3)-0.06*(-tr2.corner 1)$){};
\coordinate (-pt42) at ($(-tr2.corner 4)-0.06*(-tr2.corner 3)+0.06*(-tr2.corner 1)$){};

\fill(-pt11) []circle[radius=0.025cm];
\fill(-pt12) []circle[radius=0.025cm];
\fill(-pt21) []circle[radius=0.025cm];
\fill(-pt22) []circle[radius=0.025cm];
\fill(-pt31) []circle[radius=0.025cm];
\fill(-pt32) []circle[radius=0.025cm];
\fill(-pt41) []circle[radius=0.025cm];
\fill(-pt42) []circle[radius=0.025cm];

\draw[thick](-pt11) -- (-pt22)node[scale=0.4, midway, above, yshift=-0.09cm] {\ensuremath{#1}};
\draw[thick](-pt41)--(-pt32)node[scale=0.4, midway, below ,yshift=0.08cm] {\ensuremath{#3}};
\draw[thick](-pt21)--(-pt31)node[scale=0.4, midway, left,xshift=0.1cm, ] {\ensuremath{#2}};
\draw[thick](-pt12)--(-pt42)node[scale=0.4, midway, right, xshift=-0.09cm] {\ensuremath{#4}};
}}
}

\tikzset{pics/lambdST/.style n args={0}{
code={%
\node[regular polygon, regular polygon sides=3, draw, rounded corners=0.1cm,
inner sep=0.1cm,fill=black!10] at (0,0) (-triangle){};
\node[regular polygon, regular polygon sides=3,
inner sep=0.095cm] at (0,0) (-tr2){};
\foreach \x in {1,2,3}{
\draw[red](-tr2.corner \x)--(0,0);
  \fill (-tr2.corner \x) circle[radius=0.03cm];
  }
}}
}

\tikzset{pics/lambdSTLat/.style n args={0}{
code={%
\node[regular polygon, regular polygon sides=3, draw, rounded corners=0.1cm,
inner sep=0.1cm,fill=black!10] at (0,0) (-triangle){};
\node[regular polygon, regular polygon sides=3,
inner sep=0.095cm] at (0,0) (-tr2){};
\foreach \x in {1,2,3}{
\draw[red](-tr2.corner \x)--(0,0);
  \fill (-tr2.corner \x) circle[radius=0.03cm];
  }
}}
}

\tikzset{pics/GHZlat/.style n args={0}{
code={%
\node[regular polygon, regular polygon sides=3, draw, rounded corners=0.1cm,
inner sep=0.1cm,fill=black!10] at (0,0) (-triangle){};
\node[regular polygon, regular polygon sides=3,
inner sep=0.095cm] at (0,0) (-tr2){};
\foreach \x in {1,2,3}
  \fill (-tr2.corner \x) circle[radius=0.03cm];
}},
pics/Mamulat/.style n args={0}{
code={%
 \node[regular polygon, regular polygon sides=3,
inner sep=0.095cm] at (0,0) (-tr2){};
\fill(-tr2.corner 1) [xshift=0.0265cm]circle[radius=0.025cm];
\coordinate (-pt11) at ([xshift=0.0265cm]-tr2.corner 1) {};
\fill[](-tr2.corner 1) [xshift=-0.0265cm]circle[radius=0.025cm];
\coordinate (-pt12) at ([xshift=-0.0265cm]-tr2.corner 1) {};
\fill[](-tr2.corner 2) [xshift=-0.017cm,yshift=0.017cm] circle[radius=0.025cm];
\coordinate (-pt21) at ([xshift=-0.017cm,yshift=0.017cm]-tr2.corner 2) {};
\fill[](-tr2.corner 2) [xshift=0.02cm,yshift=-0.02cm]circle[radius=0.025cm];
\coordinate (-pt22) at ( [xshift=0.02cm,yshift=-0.02cm]-tr2.corner 2) {};
\fill(-tr2.corner 3) [xshift=-0.02cm,yshift=-0.02cm] circle[radius=0.025cm];
\coordinate (-pt31) at ([xshift=-0.02cm,yshift=-0.02cm]-tr2.corner 3) {};

\fill(-tr2.corner 3) [xshift=0.017cm,yshift=0.017cm]circle[radius=0.025cm];
\coordinate (-pt32) at ( [xshift=0.017cm,yshift=0.017cm]-tr2.corner 3) {};

\draw ([xshift=-0.0265cm]-tr2.corner 1) -- ([xshift=-0.017cm,yshift=0.017cm]-tr2.corner 2);
\draw ([xshift=0.0265cm]-tr2.corner 1) -- ([xshift=0.017cm,yshift=0.017cm]-tr2.corner 3);
\draw ([xshift=0.02cm,yshift=-0.02cm]-tr2.corner 2) -- ([xshift=-0.02cm,yshift=-0.02cm]-tr2.corner 3);

}}
}
\tikzset{pics/GHZlatIII/.style n args={0}{
code={%
\node[regular polygon, regular polygon sides=3, draw, rounded corners=0.1cm,
inner sep=0.082cm,fill=black!10] at (0,0) (-triangle){};
\node[regular polygon, regular polygon sides=3,
inner sep=0.075cm] at (0,0) (-tr2){};
\foreach \x in {1,2,3}
  \fill (-tr2.corner \x) circle[radius=0.03cm];
}}}

\tikzset{pics/MamulatII/.style n args={0}{
code={%
 \node[regular polygon, regular polygon sides=3,
inner sep=0.065cm] at (0,0) (-tr2){};
\fill(-tr2.corner 1) [xshift=0.027cm]circle[radius=0.025cm];
\coordinate (-pt11) at ([xshift=0.027cm]-tr2.corner 1) {};
\fill[](-tr2.corner 1) [xshift=-0.027cm]circle[radius=0.025cm];
\coordinate (-pt12) at ([xshift=-0.027cm]-tr2.corner 1) {};
\fill[](-tr2.corner 2) [xshift=-0.017cm,yshift=0.0165cm] circle[radius=0.025cm];
\coordinate (-pt21) at ([xshift=-0.017cm,yshift=0.0165cm]-tr2.corner 2) {};
\fill[](-tr2.corner 2) [xshift=0.02cm,yshift=-0.021cm]circle[radius=0.025cm];
\coordinate (-pt22) at ( [xshift=0.02cm,yshift=-0.021cm]-tr2.corner 2) {};
\fill(-tr2.corner 3) [xshift=-0.02cm,yshift=-0.021cm] circle[radius=0.025cm];
\coordinate (-pt31) at ([xshift=-0.02cm,yshift=-0.021cm]-tr2.corner 3) {};

\fill(-tr2.corner 3) [xshift=0.017cm,yshift=0.017cm]circle[radius=0.025cm];
\coordinate (-pt32) at ( [xshift=0.017cm,yshift=0.017cm]-tr2.corner 3) {};

\draw ([xshift=-0.0265cm]-tr2.corner 1) -- ([xshift=-0.017cm,yshift=0.017cm]-tr2.corner 2);
\draw ([xshift=0.0265cm]-tr2.corner 1) -- ([xshift=0.017cm,yshift=0.017cm]-tr2.corner 3);
\draw ([xshift=0.02cm,yshift=-0.02cm]-tr2.corner 2) -- ([xshift=-0.02cm,yshift=-0.02cm]-tr2.corner 3);

}}
}

\tikzset{pics/MamulatIIBd/.style n args={0}{
code={%
 \node[regular polygon, regular polygon sides=4,
inner sep=0.065cm] at (0,0) (-tr2){};
\fill(-tr2.corner 1) [xshift=0.027cm]circle[radius=0.025cm];
\coordinate (-pt11) at ([xshift=0.027cm]-tr2.corner 1) {};
\fill[](-tr2.corner 1) [xshift=-0.027cm]circle[radius=0.025cm];
\coordinate (-pt12) at ([xshift=-0.027cm]-tr2.corner 1) {};
\fill[](-tr2.corner 2) [xshift=-0.017cm,yshift=0.0165cm] circle[radius=0.025cm];
\coordinate (-pt21) at ([xshift=-0.017cm,yshift=0.065cm]-tr2.corner 2) {};
\fill[](-tr2.corner 2) [xshift=0.02cm,yshift=-0.021cm]circle[radius=0.025cm];
\coordinate (-pt22) at ( [xshift=0.02cm,yshift=-0.021cm]-tr2.corner 2) {};
\fill(-tr2.corner 3) [xshift=-0.02cm,yshift=-0.021cm] circle[radius=0.025cm];
\coordinate (-pt31) at ([xshift=-0.02cm,yshift=-0.021cm]-tr2.corner 3) {};

\fill(-tr2.corner 3) [xshift=0.017cm,yshift=0.017cm]circle[radius=0.025cm];
\coordinate (-pt32) at ( [xshift=0.017cm,yshift=0.017cm]-tr2.corner 3) {};

\draw ([xshift=-0.0265cm]-tr2.corner 1) -- ([xshift=-0.017cm,yshift=0.017cm]-tr2.corner 2);
\draw ([xshift=0.0265cm]-tr2.corner 1) -- ([xshift=0.017cm,yshift=0.017cm]-tr2.corner 3);
\draw ([xshift=0.02cm,yshift=-0.02cm]-tr2.corner 2) -- ([xshift=-0.02cm,yshift=-0.02cm]-tr2.corner 3);

}}
}
\tikzset{
pics/GHZ/.style n args={0}{
code={%
\node[regular polygon, regular polygon sides=3, draw, rounded corners=0.1cm,
inner sep=0.1cm,fill=black!20] at (0,0) (triangle){};
\node[regular polygon, regular polygon sides=3,
inner sep=0.095cm] at (0,0) (-tr2){};
\foreach \x in {1,2,3}
  \fill (-tr2.corner \x) circle[radius=0.03cm];
}},
pics/Mamu/.style n args={0}{
code={%
 \node[regular polygon, regular polygon sides=3,
inner sep=0.095cm] at (0,0) (-tr2){};
\coordinate (-pt11) at ([xshift=0.0265cm]-tr2.corner 1) {};
\node[transform shape,fill,circle,inner sep=0,minimum size=0.05cm] at (-pt11)(-cc11){};
\coordinate (-pt12) at ([xshift=-0.0265cm]-tr2.corner 1) {};
\node[transform shape,fill,circle,inner sep=0,minimum size=0.05cm] at (-pt12)(-cc12){};
 \coordinate (-pt21) at ([xshift=-0.017cm,yshift=0.017cm]-tr2.corner 2) {};
\node[transform shape,fill,circle,inner sep=0,minimum size=0.05cm,rotate=135] at (-pt21)(-cc21){};
\coordinate (-pt22) at ( [xshift=0.02cm,yshift=-0.02cm]-tr2.corner 2) {};
\node[transform shape,fill,circle,inner sep=0,minimum size=0.05cm,rotate=135] at (-pt22)(-cc22){};

\coordinate (-pt31) at ([xshift=-0.02cm,yshift=-0.02cm]-tr2.corner 3) {};
\node[transform shape,fill,circle,inner sep=0,minimum size=0.05cm,rotate=-135] at (-pt31)(-cc31){};
\coordinate (-pt32) at ( [xshift=0.017cm,yshift=0.017cm]-tr2.corner 3) {};
\node[transform shape,fill,circle,inner sep=0,minimum size=0.05cm,rotate=-135] at (-pt32)(-cc32){};

\draw ([xshift=-0.0265cm]-tr2.corner 1) -- ([xshift=-0.017cm,yshift=0.017cm]-tr2.corner 2);
\draw ([xshift=0.0265cm]-tr2.corner 1) -- ([xshift=0.017cm,yshift=0.017cm]-tr2.corner 3);
\draw ([xshift=0.02cm,yshift=-0.02cm]-tr2.corner 2) -- ([xshift=-0.02cm,yshift=-0.02cm]-tr2.corner 3);

}}
}

\pgfmathsetmacro{\latFacdNN}{1.1}

\pgfmathsetmacro{\latFac}{1.08}

\newcommand{\GHZpicture}[2][1]{
  \begin{tikzpicture}[scale=#1]
    \pic[transform shape] at (0,0) {GHZZZ=#2};
  \end{tikzpicture}
}

\newcommand{\MAMUpicture}[4][1]{
  \begin{tikzpicture}[scale=#1]
    \pic[transform shape] at (0,0)  {MamuZZZND={#2}{#3}{#4}};
   \end{tikzpicture}
}

\newcommand{\lambdapicture}[1][1]{
  \begin{tikzpicture}[scale=#1]
    \pic[transform shape,rotate=-90] at (0,0)  {lambdST};
   \end{tikzpicture}
}

\renewcommand{\author}[2][a]{#2\textsuperscript{#1}}
\newcommand{\affil}[2][a]{{\bf #1} #2}

\begin{document}
\begin{center}{\Large \textbf{
Tensor network representations \\from the geometry of entangled states
}}\end{center}

\begin{center}
\author[1]{Matthias Christandl},
\author[1,2,3]{Angelo Lucia},
\author[1,4,5]{P\'{e}ter Vrana},
\author[1,2,*]{Albert~H. Werner}
\end{center}

\begin{center}
\affil[1]{QMATH, Department of Mathematical Sciences, University of Copenhagen,
  Universitetsparken 5, 2100 Copenhagen, Denmark}
  \\
\affil[2]{NBIA, Niels Bohr Institute, University of Copenhagen, \\ Blegdamsvej 17, 2100 Copenhagen,
  Denmark}
  \\
\affil[3]{Walter Burke Institute for Theoretical Physics and Institute for Quantum Information \&
  Matter, California Institute of Technology, Pasadena, CA 91125, USA}
  \\
\affil[4]{Department of Geometry, Budapest University of Technology and Economics,
\\ Egry J\'{o}zsef u. 1., 1111 Budapest, Hungary}
\\
\affil[5]{MTA-BME Lend\"ulet Quantum Information Theory Research Group}
\\
* werner@math.ku.dk
\end{center}

\section*{Abstract}{\bf%
  Tensor networks provide descriptions of strongly correlated quantum systems
  based on an underlying entanglement structure given by a
  graph of entangled states along the edges that identify the indices
  of the local tensors to be contracted.
  %
  %
  Considering a more general setting, where entangled states on
  edges are replaced by multipartite entangled states on faces, allows
  us to employ the geometric properties of multipartite entanglement
  in order to obtain representations in terms of superpositions of
  tensor networks states with smaller effective dimension, leading to
  computational savings.
  %
 }

\vspace{10pt}
\noindent\rule{\textwidth}{1pt}
\tableofcontents\thispagestyle{fancy}
\noindent\rule{\textwidth}{1pt}
\vspace{10pt}

\section{Introduction}
\label{sec:intro}
Taming the exponential growth of complexity with increasing system size presents
one of the major problems in the theory of quantum many-body systems.
Tailor-made Ansatz-classes such as tensor network states have allowed for
tremendous progress over the last two decades both in terms of numerical
\cite{white1992density,verstraete2004renormalization,verstraete2008matrix,Orus2014}
as well as analytical work \cite{Fannes1992,PerezGarcia2007}. This includes
results on ground state properties
\cite{verstraete2006matrix,Perez-Garcia_2008,Schuch2010}, the classification of
quantum phases \cite{Chen2011,Schuch2011}, disordered systems
\cite{bauer2013area,friesdorf2015many,friesdorf2015local,chandran2015spectral,Goldsborough_2017},
the behaviour of open quantum many-body systems
\cite{werner2016positive,kshetrimayum2017simple},
 critical systems \cite{Vidal2008}, as well as related to the AdS/CFT-correspondence \cite{Pastawski2015}.

At the heart of such tensor network approaches is the idea to obtain a class of physical states of interest from an underlying resource state by the application of local linear  operations, which can be seen as applying stochastic local operations and classical communication \cite{dur2000three}. In the case of matrix product states (MPS) and projected entangled pair states (PEPS) these states are given by networks of maximally entangled states. For certain applications, other tensor network structures have been introduced such as tree tensor networks \cite{murg2010simulating,nakatani2013efficient} and the multi-scale renormalization ansatz (MERA) \cite{Vidal2007,evenbly2015tensor}, the latter capturing ground state properties of critical systems.

Another route to generalizing MPS and PEPS, which has been recently explored,
allows for more general resource states beyond EPR-pairs
\cite{Chen2011a,molnar2017generalization,Xie_2014}. These are based on
multi-partite quantum states shared among several lattice sites such as
GHZ-states \cite{molnar2017generalization}. In this work, we further generalize 
this approach by extending both the underlying resource state or
entanglement structure as well as  the class of allowed operations. More precisely, we
allow for one-parameter families of approximate representations, which reproduce
the state of interest to an arbitrary precision.

We show how these approximate representations can be turned into exact
representations in terms of a moderate number of linear superposition of
tensor network states. This approach provides more efficient tensor
network representations for certain classes of states, and gives rise to an
efficient algorithm to reconstruct expectation values faithfully.
In addition, we obtain results that allow to simulate or re-express tensor network states
based on multi-partite resource states in terms of ordinary PEPS, thereby
enabling a numerical treatment of these states by the highly optimized methods
that exist for PEPS. As a concrete example, we show that that semi-injective
PEPS on the two-dimensional square lattice based on GHZ states as introduced in
\cite{molnar2017generalization} with bond dimension $D$ can be represented as a
normal PEPS of bond dimension $2D$.

As an example of the application of our results, we consider the Resonating
Valence Bond (RVB) state, which has originally been proposed as the ground state
of spin liquids \cite{Anderson_1973} and is also of importance in the theory of
high-temperature superconductivity \cite{anderson1987resonating}. The RVB state
has also been studied extensively in the context of PEPS
\cite{verstraete2006criticality,Schuch2012, poilblanc2012topological}. A first
tensor network representation of this state as a PEPS with bond dimension equal
to 3 was introduced in \cite{verstraete2006criticality}. We present two new
representations of the state: a PEPS with non-uniform bond dimensions on the
kagome lattice, with bonds of dimension $(2,2,3)$ depending on the orientation
of the bond, which we show is optimal; and a representation in terms of a
superposition of a linear (in the system size) number of PEPS with bond
dimension equal to 2.

The paper is organized as follows. In
Section~\ref{sec:MPS:and:AlgComplThe}, we revisit the definition of
MPS and connect it to notions in algebraic complexity theory. In
particular, we introduce degenerations as a way of obtaining
approximate state representations with smaller bond dimension. This
leads to the concept of border bond dimension and we give a first
example in terms of the W-state where this approximate representation
leads to a provably more efficient representation.
These ideas are
then generalized in Section~\ref{sec:entanglement-structures} to PEPS
and other entanglement structures of multi-partite states, seen as
representations of graphs and hypergraphs.
In Section~\ref{sec:exStatRep}, we consider the question how to transform
a given entanglement structure into another one based on degenerations
and provide an efficient algorithm to compute exact expectation values
even in this approximate setting. At the same time, this result
lets us interpret states obtained from degenerations as arising from
superpositions of tensor networks states with
the number of superimposed states growing linearly with the system size.

The main building block for this general result turns out to be an
approximate conversion between the plaquette states of the two
entanglement structures involved. Therefore, we present in
Section~\ref{sec:plaquette-conversions} specific examples of such
plaquette conversions between important tensor network classes such as
PEPS, generalized injective PEPS and the RVB state on the kagome
lattice, proving lower and upper bounds on the required bond
dimensions.
In \Cref{sec:comp-complexity} we include an analysis of the
computational cost of computing expectation values of tensor networks
states on the square and kagome lattice using these more efficient
approximate representations, focusing in particular on the RVB state.

\section{Matrix product states \& Algebraic complexity theory}
\label{sec:MPS:and:AlgComplThe}
As a starting point for more general tensor networks, we discuss in this section the concept of MPS representations from the point of view of algebraic complexity theory. In particular, we introduce the concept of degenerations, which correspond to a weaker notion of MPS representations that allows for a controlled approximation error. We then show that this notion leads to a more efficient translation-invariant MPS representation of the W-state on a ring.

Let us first recall the definition of an MPS. To this end, we consider a state vector $T\in\left(\C^d\right)^{\otimes L}$ of $L$ spins of local dimension $d$. Expanding $T$ with respect to a product basis $\{\ket{i_1,\dots, i_l}\}$ we obtain
\begin{align}
  T = \sum_{i_1,\dots, i_L=1}^d T_{i_1,\dots, i_L} \ket{i_1,\dots, i_l}
\end{align}
with $T_{i_1,\dots, i_L}$ denoting the basis coefficients. An MPS representation of $T$ can now be seen as a particular way of decomposing the order $L$ coefficient tensor $T_{i_1,\dots, i_L}$ according to
\begin{align}
  T_{i_1,\dots, i_L} = \tr\left(M_{i_1}^{[1]} \cdots M_{i_L}^{[L]}\right)
\end{align}
with $M_{i_j}^{[j]}$ being a $D\times D$-matrix of sufficiently large dimension $D$, the so-called bond dimension. For each spin $j=1,\dots, L$, we can then define an order $3$ tensor according to $M^{[j]} = \sum_{j=1}^d \sum_{\alpha,\beta =1}^D (M_{i}^{[j]})_{\alpha,\beta} \ketbra{\alpha}{\beta}\otimes\ket{i}$ and by setting $A_j = \sum_{j=1}^d\sum_{\alpha,\beta =1}^D (M_{i}^{[j]})\ketbra{i}{\alpha\beta}$, we can in turn express the state vector $T$ as
 \begin{equation}\label{eq:MPS}
    T = \left(\bigotimes_{j=1}^L A_j \right) \qty(\bigotimes_{k=1}^{L} \Omega^D_{k,k+1}), \quad \Omega^D
    = \sum_{l=1}^D \ket{l,l},
  \end{equation}
which is an MPS representation with periodic boundary conditions and bond dimension $D$ of the state $T$. Note that the two tensor products in \eqref{eq:MPS} are shifted with respect to each other by  half a physical lattice site such that $\Omega_{k, k+1}$ corresponds to a maximally entangled state shared between the lattice sites $k$ and $k+1$, whereas $A_j$ acts on the combined virtual space $\C^D\otimes\C^D$ at lattice site $j$ (see also \Cref{fig:mps}). The important observation about \eqref{eq:MPS} is the fact that we are applying these linear maps $A_j$ locally at each lattice site to an underlying resource state, which in the case of an MPS is given by maximally entangled pair states $\Omega^D$ shared between neighbouring lattice sites if we think of the $L$ spins positioned on a one-dimensional ring.

  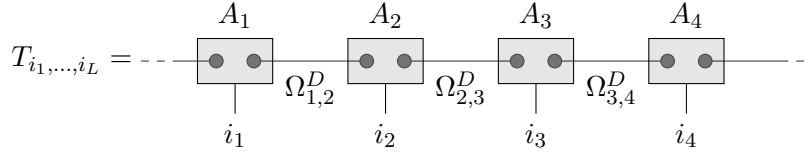
\begin{figure}[t]
    \[ T_{i_1,\dots,i_L} =
      \begin{tikzpicture}[baseline=-2pt]
        \tikzstyle{site}=[shape=circle,draw=black!70,fill=black!50,scale=0.5];
        \tikzstyle{epr}=[-,draw=black!80];
        \tikzstyle{tensor}=[shape=rectangle,draw=black,fill=black, fill opacity=0.1, draw opacity=1, text opacity=1];
      \foreach \x in {0,2,...,6} {
        \node[site] (\x) at (\x,0) {};
      }
      \foreach \x in {1,3,...,7} {
        \node[site] (\x) at ($ (\x,0) - (0.5,0) $) {};
      }
      \draw[epr,dashed] (-1,0) -- (-.5,0);
      \draw[epr] (-.5,0) -- (0);
      \foreach \x/\y/\z/\zz in {1/2/1/2,3/4/2/3,5/6/3/4}{
        \draw[epr] (\x) -- (\y) node[below,midway] (test\x) {$\Omega^D_{\z,\zz}$};
      }

      \draw[epr] (7) -- (7.5,0);
      \draw[epr,dashed] (7.5,0) -- (8,0);

      \foreach \x/\y [count=\c] in {0/1,2/3,4/5,6/7}{
        \node[tensor,label=above:$A_\c$, minimum width=1cm, minimum height=0.6cm] (T\x) at ($(\x)!0.5!(\y)$) {} ;
        \draw[-,black] (T\x) -- +(0,-0.7) node[below] {$i_\c$};
      }

      \end{tikzpicture}
  \]
  \caption{\label{fig:mps}
    Matrix product states as a network of maximally entangled states $\Omega^D$
    shared between physical sites of the 1D lattice to which local operations $A_j$ are applied on combined virtual space on each site.}
\end{figure}

The question of whether a given vector $\psi$ living in a tensor product space $\bigotimes_{j=1}^L \C^{d_j}$ can be transformed into a state $\phi\in\bigotimes_{j=1}^L \C^{d'_j}$ via local linear maps $A_j:\C^{d_j}\mapsto \C^{d'_j}$ is known in the context of algebraic complexity theory as restriction.

\begin{definition}[Restriction]\label{def:restriction}
  Given $\psi \in \bigotimes_{j=1}^m \C^{d_j}$ and $\phi\in\bigotimes_{i=j}^m\C^{d'_j}$ we say that $\psi$ restricts to $\phi$, denoted as $\psi\geq\phi$ if there exist linear maps $\{A_j:\C^{d_j}\mapsto\C^{d'_j}\}$ such that
  \begin{equation}\label{eq:restric}
    \left(\bigotimes_{j=1}^m A_j\right) \psi = \phi\;.
  \end{equation}
\end{definition}
Note that the domain of the local maps $A_j$ is implicitly specified via the chosen tensor decomposition $\bigotimes_{j=1}^m \C^{d_j}$ of the underlying Hilbert space, i.e. each $A_j$ acts on the corresponding tensor factor in this decomposition.

An important generalization of the concept of restriction is that of degeneration. Here, instead of an exact conversion according to \eqref{eq:restric}, we allow for approximate conversions between a state $\psi$ to a state $\phi$ by local operations.

\begin{definition}[Degeneration]\label{def:degeneration}
  Let $\psi \in \bigotimes_{i=1}^m \C^{d_i}$ and $\phi\in\bigotimes_{i=1}^m \C^{d'_i}$ be pure states.
  We say that $\psi$ degenerates to
  $\phi$ with error degree $e$, denoted as $\psi \degengeq^e \phi$,
  if there exist linear maps $A_i(\varepsilon):\C^{d_i}\mapsto\C^{d'_i}$, depending polynomially on $\varepsilon$, such that
  \begin{equation}\label{eq:degener}
    (A_1(\varepsilon)\otimes \cdots \otimes A_m(\varepsilon))\psi = \varepsilon^d \phi +
    \sum_{l=1}^e \varepsilon^{d+l} \widetilde{\phi}_l,
  \end{equation}
  for some tensors $ \widetilde{\phi}_l$ and some integer $d$.
  We simply write $\psi \degengeq \phi$ if $\psi \degengeq^e \phi$ for some error degree $e$.
\end{definition}
\begin{remark}
   It is known (see \eg \cite{Burgisser1997}) that the definition of degeneration as given in \eqref{eq:degener} is equivalent to the following statement: $\psi \degengeq \phi$ if there exists a sequence $\left((A_j(n))_{j=1}^m\right)_n$ of linear maps
$A_i(n):\C^{d_i}\mapsto\C^{d'_i}$ such that
  \[
    \lim_{n\rightarrow\infty}(A_1(n)\otimes \cdots \otimes A_m(n))\psi = \phi\;.
  \]
\end{remark}

\begin{remark}
  The notion of degeneration is strictly weaker than that of restriction, in that given two vectors $\psi \in \bigotimes_{i=1}^m \C^{d_i}$, $\phi \in \bigotimes_{i=1}^m \C^{d'_i}$, $\psi$ can degenerate to $\phi$ even if we cannot find a restriction, \ie $\psi\degengeq\phi$, but $\psi\not\geq\phi$. A well known example of this fact is the degeneration from the GHZ-state on three qubits to the W-state \cite{dur2000three,Acin2001,verstraete2003normal} (see also the W-state example at the end of this section)
\end{remark}

Let us denote by $\GHZ[m]{k}$ the $k$-level Greenberger–Horne–Zeilinger (GHZ) state on $m$ parties:
\begin{equation}
  \GHZ[m]{k} = \sum_{i=1}^k \underbrace{\ket{i} \otimes \dots \otimes \ket{i}}_{m \text{ times}}.
\end{equation}
We note that $\GHZ{k}$ agrees with the unit tensor in algebraic complexity theory, usually  denoted as $\tensor{k}$.
In the cases when $m$ is small, as in the case $m=3$ which we will study extensively, we will use the following graphical notation to
represent the GHZ state:
\[
  \GHZ[3]{k} = \sum_{i=1}^k \ket{i}\ket{i}\ket{i} = \GHZpicture[1.2]{k}.
\]
When the number of parties is clear from the context,  we will simply write $\GHZ{k}$ for simplicity.

Seen as the unit tensor, the GHZ state plays a special role in algebraic complexity theory, which leads us to define the
following quantities.
\begin{definition}[Rank and border rank]
 For $\phi \in \bigotimes_{i=1}^m \C^{d_i}$ we define the \emph{rank} and \emph{border rank} of $\phi$ as
\begin{align}
  \label{eq:rank}
  R(\phi) &= \min\{k\in\N; \; \GHZ[m]{k} \ge \phi\},\\
  \label{eq:border rank}
  \underline{R}(\phi) &= \min\{k\in\N; \; \GHZ[m]{k} \degengeq \phi\},
\end{align}
respectively.
\end{definition}

\begin{remark}\label{remark:rank-sr}
Both the rank and the border rank depend on the tensor product structure of the space where $\phi$
lives: if we regroup the tensor product differently, the rank might change. It is easy to see that if
we group factors together, \ie we see $\phi$ not as an $m$-partite state but as an $m'$-partite
state, with $m'<m$, then both the rank and the border rank will not increase. This is due to the
fact that after regrouping the state $\GHZ[m]{k}$ becomes the state $\GHZ[m']{k}$, so if a
restriction/degeneration to $\phi$ was possible before grouping it will still be possible after
grouping.

Moreover, if $m=2$, then both rank and border rank  of $\phi$ coincide with the Schmidt rank across
the bipartition. Therefore, we can see that the maximal Schmidt rank across any possible
bipartition,
\begin{equation}\label{eq:sr-max}
  \operatorname{Sr}_{\max}(\phi) = \max_{K\subset \{1,\dots, m\}} \rank \tr_{K} \dyad\phi\;,
\end{equation}
is a lower bound to $\underline{R}(\phi)$.
\end{remark}

The question whether a given quantum state $\phi$ can be represented as an MPS with periodic boundary conditions of bond dimension $D$ is equivalent to the question, whether $\bigotimes_{k=1}^{L} \Omega^D_{k,k+1}\geq \phi$, where $\Omega^D_{k,k+1}$ again corresponds to a maximally entangled state with $D$ levels shared between the physical lattice sites $k$ and $k+1$ (see also Fig.~\ref{fig:mps}). The state $\bigotimes_{k=1}^{L} \Omega^D_{k,k+1}$ is known in the context of algebraic complexity theory as the iterated matrix multiplication tensor, which is indeed the
$L$-tensor given by maximally entangled states of dimensions $D_1$, $D_2$, $\dots$, $D_L$  arranged in a
cycle. We will denote this tensor as $\MAMU{D_1,\dots, D_L}$ (for Matrix Multiplication):
\begin{equation}
  \MAMU{D_1,\dots,D_L} = \sum_{i_1,\dots,i_L=1}^{D_1,\dots,D_L} \ket{i_L i_1} \otimes \ket{i_1
    i_2}\otimes \cdots \ket{i_{L-1} i_L}.
\end{equation}
This tensor is often denoted as $\tensor{D_1,\dots,D_m}$ in algebraic complexity. We write $\MAMU[L]{D}$ if $D=D_1=\dots=D_L$, a case typically denoted as $\textrm{IMM}_D^L$ in the literature. As in the case of the GHZ state, we will write $\MAMU{k}$ without the parameter $m$ when this does not cause any ambiguity.

In the cases where $L$ is fixed and small, as for example when $L=3$, we will use a similar graphical notation as for the GHZ-state:
\[
  \MAMU{D_1,D_2,D_3} = \sum_{i_1,i_2,i_3=1}^{D_1,D_2,D_3} \ket{i_1,i_2}\ket{i_2,i_3}\ket{i_3,i_1} =
  \MAMUpicture{D_1}{D_2}{D_3}
\]

As mentioned before, $\MAMU[L]{D}$ restricting to an $L$-tensor $\phi$ is equivalent to
$\phi$ admitting an MPS representation of bond dimension $D$ with periodic boundary conditions. More
generally, since PEPS and other tensor network states are defined in terms of networks of maximally
entangled states, we will be interested in conversions between
$\MAMU[L]{K}$ and other states.
This leads us to define, in analogy to the rank and border rank the following quantities.

\begin{definition}[Bond and border bond dimension]\label{def:bond}
For $\phi \in \bigotimes_{i=1}^m \C^{d_i}$ we define the (MPS) bond dimension and border bond dimension of $\phi$ as
\begin{align}
  \label{eq:bond}
  \bond(\phi) &= \min\{k\in\N; \; \MAMU[m]{k} \ge \phi\},\\
  \label{eq:border bond}
  \borderbond(\phi) &= \min\{k\in\N; \; \MAMU[m]{k}\degengeq \phi\},
  \end{align}
 respectively.
\end{definition}

\begin{remark}\label{remark:bond-sr}
  Note that if we split the vertices $\{1,\dots,m\}$ into $\{1,\dots,k\}$ and $\{k+1,\dots,m\}$ for
  some $k=1,\dots,m$, and we see $\MAMU[m]{k}$ as a bipartite quantum state across this cut, the
  resulting state is equivalent to $\MAMU[2]{k} = \GHZ[2]{k^2}$ (since the $\textrm{MaMu}$ tensor corresponds to periodic boundary conditions).
  Similarly to \eqref{eq:sr-max}, we can consider the maximal Schmidt rank across any
  cut instead of any bipartition (\ie we only consider bipartitions where the two parts are
  contiguous in the spin chain):
  \begin{equation}\label{eq:sr-cut}
  \operatorname{Sr}_{\text{cut}}(\phi) = \max_{k\in \{1\dots m\}} \rank \tr_{1,\dots,k} \dyad\phi.
\end{equation}

Then by the previous argument, we see that
\[
    \operatorname{Sr}_{\text{cut}}(\phi)^\frac{1}{2} \le \borderbond(\phi) \le \bond(\phi).
  \]
\end{remark}

To conclude this section, we present an example where degenerations offer a more efficient state representation, \ie an example where we have a separation between bond and border bond dimension if we require a translation invariant representation in both cases. To this end, consider the $W$-state on $L$ qubits defined as
\begin{equation}
  W(L) = \sum_{\substack{i_1, \dots, i_L = 0\\ i_1 + \cdots + i_L=1}}^1 \ket{i_1, \dots, i_L}\;.
\end{equation}
In the following, we give a translation-invariant representation of $W(L)$ with border bond dimension $2$ independent of the system size $L$. This representation follows immediately from the well-known fact that the W-state (viewed as a homogenous polynomial $x^{L-1} y$) has border-Waring rank equal to two. For completeness we will give the argument below. In contrast to this border bond dimension $2$ representation, the results from \cite{michalek2019quantum} on the quantum Wielandt inequality imply that the  bond dimension of a translation-invariant restriction has to grow as $\exp(\frac{1}{3} \omega(3L))$, with $\omega(x)$ the product logarithm or Lambert function and it has been conjectured that the growth should be of the order $L^{1/3}$  \cite{PerezGarcia2007}. We note however that without the restriction to the translation invariant setting one can also find a bond dimension $2$ representation of the W-state.

In order to find a representation of the W-state $W(L)$ on $L$ parties with border bond dimension $2$, note that
\begin{align}
  \psi_L(\varepsilon) = \begin{pmatrix}1 &0 \\ 0 &\varepsilon\end{pmatrix}^{\otimes L} \ket{0}^{\otimes L} = \left(\ket{0} + \varepsilon\ket{1}\right)^{\otimes L} = \ket{0}^{\otimes L} + \varepsilon W(L) + O(\varepsilon^2)
\end{align}
as a product state has bond dimension $1$. Accordingly, the state $\widetilde{W}(\varepsilon,L) = \psi_L(\varepsilon) - \ket{0}^{\otimes L}$ is  a degeneration from $\MAMU[2]{L}$. The corresponding MPS-matrices of this translation-invariant border bond dimension $2$ representation can be chosen as
\begin{align}
  M_0 = \begin{pmatrix} 1 & 0 \\ 0 & (-1)^{1/L}\end{pmatrix} \quad\text{and}\quad M_1 = \begin{pmatrix}
    \varepsilon &0 \\ 0 &0
  \end{pmatrix}\;,
\end{align}
because  $\tr\left(M_0^L\right) = 0 $ and $\tr\left(M_0^n M_1^m\right) = \varepsilon^m$.

\section{From PEPS to entanglement structures induced by hypergraphs}
\label{sec:entanglement-structures}

Going beyond one spatial dimension, the procedure described for MPS can be generalized to higher dimensional lattices which leads to the notion of projected entangled pair states (PEPS) \cite{verstraete2004renormalization}. Again maximally entangled states are shared with neighbouring lattice sites, and the local operations preparing the state of interest from this underlying resource state are allowed to operate on the combined virtual space that includes all these subsystems. This motivates the following definition of tensor networks and entanglement structures for general graphs.
  \begin{definition}[Entanglement Structure (Graph)]\label{def:es-graph}
    Let $\grph=(V,E)$ be a graph with vertex set $V$, edge set $E$, and let $w$ be an integer-valued weight
    function on the edge set $w:E\to \N$. For each $e\in E$, let $\Omega_{e}\in \C^{w(e)}\otimes \C^{w(e)}$ be the
    maximally entangled state of Schmidt rank $w(e)$.
    An \emph{entanglement structure} or
    \emph{contraction scheme} w.r.t. to $\grph$ is then given by
\begin{equation}
      \Psi(\grph) = \bigotimes_{e\in E} \Omega_{e} \in \bigotimes_{v\in V} \C^{D_v}.
    \end{equation}
    with the local virtual dimension at vertex $v\in V$ given by $D_v=\prod_{e:v\in e} w(e)$. We call the \emph{bond dimension} of $\Psi$ the quantity $\max_{v\in
      V}\{\sqrt[\text{deg}(v)]{D_v}\}$.

    For a fixed integer $D$ we will denote by $\Psi_D(\grph)$ the
    entanglement structure obtained by setting a constant weight $w(e)=D$ on the graph (which will
    then have bond dimension $D$).  We will also say that a state
    $\phi \in \bigotimes_{v \in V} \C^{d_v}$ is representable by $\grph$ with bond dimension
    $D$ iff $\Psi_D(\grph)\geq \phi$, where the locality structure of the restriction maps $A_j$ is given by vertex-set $V$ of $\grph$ according to the tensor decomposition $\bigotimes_{v\in V} \C^{D_v}$.
  \end{definition}

  \begin{remark}\label{remark:plaquettes-are-structures}
  We remark that the term bond dimension is used in two different contexts. In Definition~\ref{def:es-graph}  it refers to the dimensionality of the maximally entangled states that form the graph entanglement structure, whereas in Definition~\ref{def:bond} it characterizes for a given state vector the minimal bond dimension necessary to represent the state as an MPS.
  Note also that the notion of bond dimension and border bond dimension given in \Cref{def:bond} can be naturally extended to
  the case of entanglement structures defined on a general graph, \ie as
  \[ \bond_{\grph}(\phi) = \min\{ D \,|\, \Psi_D(\grph) \ge \psi \},\] and similarly for
  $\borderbond_{\grph}$. Since the tensor $\MAMU[m]{k}$ can also be written as $\Psi_k(\grph[C]_m)$, where $\grph[C]_m$ is
  the cycle graph on $m$ vertices, Definition~\ref{def:bond} of bond dimension and border bond dimension coincide with $\bond_{\grph[C]_m}$ and $\borderbond_{\grph[C]_m}$
\end{remark}

Accordingly, MPS and PEPS fit naturally in this setting with the graph represented given by the path graph
$\grph[L]_L$ in the case of open boundary MPS, by the cycle graph $\grph[C]_L$ in the case of
  periodic boundary MPS, and by a lattice graph in the case of PEPS, respectively (see Figure~\ref{fig:entStruExI}). However, also more general tensor networks
  that allow for example for maximally entangled states between next-to-nearest neighbours can be captured within this framework.

  \Cref{def:es-graph}\ identifies the notion of representability again with the existence of a restriction according to
  Definition~\ref{def:restriction}, where the linear maps $\{A_i\}$ correspond to the local tensors
  defining the tensor network state. We remark that our notion of bond dimension is chosen in such a
  way that it captures how the number of parameters necessary to specify such a tensor network
  state scales with the system size. More precisely, given the bond dimension $D$, the number of
  parameters scales as $\mathcal{O}(\abs{V} D^{\text{deg}(\grph)} d_{\max})$, where $d_{\max}$ is
  the maximal physical dimension given by $\max_{i\in V}(d_i)$ and $\text{deg}({\grph})$ is the
maximal degree of the vertices of $\grph$. This definition is general enough to capture savings in
the bond dimension due to non-uniform edge dimensions with respect to the different edges in the
graph, but at the same time reduces to the usual scaling of $\mathcal{O}(\abs{V} D^2 d_{\max})$ or
$\mathcal{O}(\abs{V} D^4 d_{\max})$ in the case of MPS or PEPS with uniform bond dimension,
respectively.

\begin{figure}[t]
    \centering
      \includegraphics[width=\textwidth]{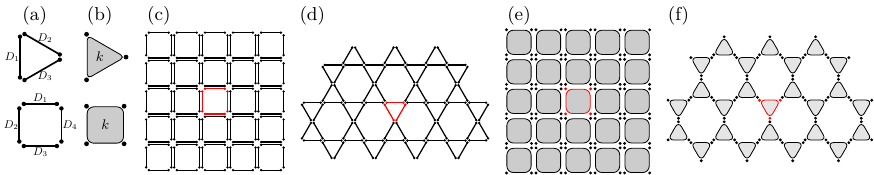}
     \caption{Examples of entanglement structures:
     (a) Plaquette tensors given by maximally entangled
     states shared cyclically between three and four sites, where the indices $D_i$ denote the number of
     levels of the entangled states;
     (b) same as in (a) but with a GHZ-state of $k$ levels shared between the sites;
     (c-d) entanglement structures on the square and kagome lattice, the plaquette shown in red indicates
     how to obtain those from the plaquette states from (a), neighbouring entangled states on the same edge can be reinterpreted as a single maximally entangled state with the number of levels squared;
      (e-f) same lattices as in (c-d) but with a generalized PEPS based on 3- and 4-party GHZ-states. \label{fig:entStruExI}
    }
  \end{figure}

 We will now generalize the concept
  of contraction schemes to representations of hypergraphs, where the underlying entanglement
  structure is given by multipartite entangled states shared among all vertices that are connected
  by a hyperedge.
  \begin{definition}[Entanglement Structure (Hypergraph)]\label{def:ES:Hypergraph}
  Let $\grph=(V,E)$ be a hypergraph, with vertex-set $V$ and hyperedge set $E$. For each $e\in E$,
  let
  $\Omega_e \in \bigotimes_{v\in e} \C^{D_{v,e}}$ be a pure state.
  An \emph{entanglement structure} or \emph{contraction scheme} w.r.t. to $\grph$ is then given by
   \begin{equation*}
    \Psi(\grph) = \bigotimes_{e\in E} \Omega_e \in \bigotimes_{v\in V} \C^{D_v}.
  \end{equation*}
with the local virtual dimension at vertex $v\in V$ given by $D_v = \prod_{e:v\in e} D_{v,e}$ and the
bond dimension of $\Psi(\grph)$ defined as $D=\max_{v\in V}\{\sqrt[\text{deg}(v)]{D}\}$.
\end{definition}
Note that, contrary to the graph case, the hypergraph entanglement structure is not simply defined
by weights on the hyperedges but also by the choice of multi-partite entangled states $\Omega_e$
(since there exist non-equivalent multi-partite entangled states, we cannot simply specify the edge
dimension as in the case of graphs).  As an example, note that the $\GHZ[m]{k}$ can be written as an
entanglement structure on the hypergraph $\grph[H]_m$ with $m$ vertices and a single hyperedge
containing all vertices:
\[ V(\grph[H]_m) = \{0,\dots,m-1\},\quad E(\grph[H]_m) = \{ V\}, \]
by choosing $\Omega_V = \GHZ[m]{k}$.

  In analogy to the graph case, we can still consider a hypergraph entanglement structure $\Psi(\grph)$ as a contraction scheme, with a
  state $\phi$ being representable by $\Psi(\grph)$ iff we can find local maps
  $\{A_v:\C^{D_v}\mapsto d_v\}$ satisfying \eqref{eq:restric} (\ie iff $\Psi(\grph) \geq \phi$).  Note that as in the graph case, the locality structure of the restriction maps $A_j$ is given by vertex-set $V$ of $\grph$ according to the tensor decomposition $\bigotimes_{v\in V} \C^{D_v}$.

  Particular examples of entanglement structures on hypergraphs
  from the literature are projected entangled simplex states \cite{Xie_2014} and semi-injective PEPS  \cite{molnar2017generalization}.
  In the latter case, the vertex set is
  given by the same vertices of the two-dimensional square lattice on $L\times L$ sites (\ie $ \grph[C]_L \times
  \grph[C]_L$), but instead of having an edge for each pair of neighbouring sites, there is instead
  an hyperedge containing the 4 vertices in each of the plaquettes:
  \begin{align*}
    V &= [0,L] \times [0,L] \cap \Z^2, \\
    e \in E \iff e &= \{ (i,j), (i+1,j), (i,j+1), (i+1,j+1)
    \} \qq{for some} (i,j) \in V.
  \end{align*}
  Finally, for each hyperedge $e$, we choose a GHZ state on $4$ parties as $\Omega_e$, so that the
  resulting entanglement structure is given by
  \[
    \Phi = \bigotimes_{e\in E} \GHZ[4]{k}.
  \]
  The bond dimension of $\Phi$ is then simply given by the number of GHZ levels $k$. For $k=2$ this class of states also includes
  the ground state of the CZX-model, exhibiting an on-site $\mathbb{Z}_2$ symmetry \cite{Chen2011a}.

In order to find representations of physical states with optimal bond dimension, we will analyze how
well a given contraction scheme can be expressed in terms of another. To this end, we introduce the
following definition, which specializes \Cref{def:restriction} to the particular case of
entanglement structures.

\begin{definition}[Conversion of entanglement structures]
  Let $\grph$ and $\grph'$ be two graphs or hypergraphs with the same vertex set $V$. Given two
  entanglement structures $\Psi(\grph)$ and $\Psi(\grph')$ we say that $\Psi(\grph)$ restricts to
  $\Psi(\grph')$, and we write $\Psi(\grph) \ge \Psi(\grph')$, if there exist linear maps $A_v:
  \C^{D_v}\to \C^{D'_v}$ for each $v\in V$ such that
  \begin{equation}
    \qty(\bigotimes_{v\in V} A_v) \Psi(\grph) = \Psi(\grph'),
  \end{equation}
  where $D_v$ and $D_v'$ are the local dimension at vertex $v$ of $\Psi(\grph)$ and $\Psi(\grph')$, respectively.
The notion of degeneration specializes to the case of entanglement structures exactly in the same
way as restrictions (\ie by allowing local maps to act according to
the tensor product structure defined by the vertex set $V$).
\end{definition}

\begin{remark}\label{remark:mps-obc}
Note that in the case of a path graph $\grph[L]_L$ on $L$ sites and a graph entanglement structure
$\Psi_k(\grph[L]_L)$ (\ie the entanglement structure of an open boundary condition MPS of bond
dimension $k$), the existence of a degeneration implies the existence of a restriction. More concretely, if
$\Psi_k(\grph[L]_L) \degengeq T$ for some $L$-partite quantum state $T$, then also
$\Psi_k(\grph[L]_L) \ge T$.
This is due to the fact that, by sequential SVD decompositions (see \cite[Theorem
1]{PerezGarcia2007} and \cite[pag. 18-20]{Orus2014}), it is possible to construct $T$ with a
bond dimension equal to the maximal Schmidt rank across any cut $\operatorname{Sr}_{\text{cut}}(T)$
(see \eqref{eq:sr-cut}): equivalently $\Psi_{k}(\grph[L]_L) \ge T$ for
$k=\operatorname{Sr}_{\text{cut}}(T)$.  On the other hand we can repeat the argument of
\Cref{remark:bond-sr} for $\Psi_k(\grph[L]_L)$, but taking into account that we have open boundary
conditions instead: we see that after grouping neighbouring
sites we can convert $\Psi_k(\grph[L]_L)$ to $\Psi_k(\grph[L]_{L'})$ with $L'<L$, so that if
$\Psi_{k}(\grph[L]_L) \degengeq T$ then necessarily $k\ge\operatorname{Sr}_{\text{cut}}(T) $.
On the other hand, as soon as there are cycles in the graph, this argument breaks down, and we have
already seen in the W-state example at the end of Section~\ref{sec:MPS:and:AlgComplThe} that a degeneration can exist when the
corresponding restriction does not.
\end{remark}

\paragraph{RVB state}
Another example of entanglement structure is found in the context of PEPS representations of the Resonating Valence Bond state (RVB) \cite{verstraete2006criticality,Schuch2012, poilblanc2012topological}. Rephrasing the construction used in \cite{Schuch2012} in terms of Definition~\ref{def:ES:Hypergraph}, the RVB state on the kagome lattice can be represented by an entanglement
    structure $\Lambda$, where we assign to each plaquette the 3-party entangled state ${\lambda}\in (\C^3)^{\otimes 3}$ (see Figure~\ref{fig:RVB-state}) given by
\begin{align}
  \lambda = \sum_{i,j,k=0}^2 \epsilon_{i,j,k} \ket{i,j,k} + \ket{2,2,2} = \lambdapicture[1.1]\;,
\end{align}
where $\epsilon_{i,j,k}$ denotes the antisymmetric tensor with
$\epsilon_{0,1,2}=1$.

\begin{figure}[tb]
    \[ \Lambda = \quad
    \begin{tikzpicture}
     \foreach \x in {0,1,2,3,4,5}{
     \pic[transform shape](GHZ1\x) at ($(\latFac*\x,0)+(0,0)$) {lambdST};
     }
     \foreach \x in {0,1,2,3,4,5}{
     \pic[transform shape,rotate=180](GHZ2\x) at ($(\latFac*\x,0.63)+(0,0)$) {lambdST};
     }
     \foreach \x in {0,1,2,3,4}{
     \pic[transform shape](GHZ3\x) at ($(\latFac*\x,0.955)+(0.54,0)$) {lambdST};
     }
      \foreach \x in {0,1,2,3,4}{
     \pic[transform shape,rotate=180](GHZ4\x) at ($(\latFac*\x,1.585)+(0.54,0)$) {lambdST};
     }
      \foreach \x in {0,1,2,3,4,5}{
     \pic[transform shape](GHZ5\x) at ($(\latFac*\x,1.91)+(0,0)$) {lambdST};
     }
      \foreach \x in {0,1,2,3,4}{
     \pic[transform shape,rotate=180](GHZ6\x) at ($(\latFac*\x,-0.32)+(0.54,0)$) {lambdST};
     }

   \node[above=-0.01cm of GHZ31-triangle.side 3,xshift=0.16cm,yshift=-0.08cm]{$\lambda$};
   \node[above=-0.01cm of GHZ32-triangle.side 3,xshift=0.16cm,yshift=-0.08cm]{$\lambda$};
   \node[above=-0.01cm of GHZ33-triangle.side 3,xshift=0.16cm,yshift=-0.08cm]{$\lambda$};

 \end{tikzpicture}
 \]
 \caption{
  The hypergraph entanglement structure $\Lambda$ of the RVB state. The
  triangles $\lambdapicture[0.9]$ represent the $\lambda$ tensor, and the entanglement structure $\Lambda$ is obtained by
  tensoring $\frac{2}{3}L$ copies of it and arranging the vertices according to the kagome lattice.
}
\label{fig:RVB-state}
\end{figure}
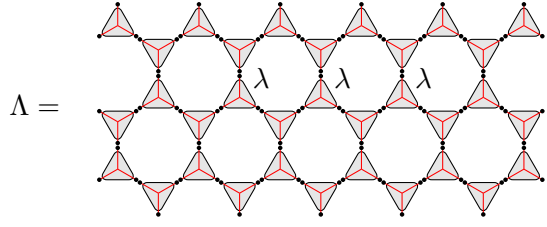

In~\cite{Schuch2012} the entanglement structure $\Lambda$ for the RVB state
composed of plaquette tensors $\lambda$ was shown to have a PEPS representation
with bond dimension 3, by constructing the explicit linear maps realizing the
conversion. The RVB state on the kagome lattice thus has bond dimension at most
3. We now give a representation of the same state with border bond dimension
equal to 2 (in other words we reduce the local virtual dimension at each vertex
from $3^4=81$ to $2^4 =16$), which, as we show in Section~\ref{sec:LambdNoBd2},
is smaller than the optimal PEPS representation that can be obtained with
restrictions.

In order to do so, we need to show that $\MAMUpicture[0.7]{2}{2}{2} \degengeq
\lambdapicture[0.8]$. The MPS matrices of the degeneration are given by
\begin{equation}\label{eq:RVBdegen}
  M^{[j]}_0(\epsilon) = \half\mqty( 0 & \epsilon \\  \epsilon  & 0  )  ,\quad
  M^{[j]}_1(\epsilon) = \mqty( 0 & - \epsilon \\ \epsilon & 0  ) ,\quad
  M^{[j]}_2(\epsilon) = \mqty(  1 & 0  \\ 0 & -1 ) + \frac{\delta_{j,3}\epsilon^2}{2} \mqty(1 & 0 \\ 0 & 1)
\end{equation}
with $j=1,2,3$, resulting in  the state $\varepsilon^2 \lambda + \varepsilon^4 \ket{2}\otimes(\frac{1}{4}\ket{00} - \ket{11})$.

In the following section, we will show how this improved representation can be
used to compute exact contractions and expectation values for the RVB state.

\section{Exact state representations from degenerations}
\label{sec:exStatRep}

Having introduced the concept of tensor network representations in terms of degenerations and border bond dimension in the previous sections, we will now turn to the question how to obtain physical information in this approximate setting.
To this end, we present a general method to turn an approximate conversion between two entanglement structures given by degenerations into an exact one by allowing for superpositions. The main advantage of this approach is on the one hand that this can be accomplished with only a linear overhead in the number of plaquette states involved and on the other hand that it also allows for the computation of exact expectation values.
These properties are summarized in the following theorem. The proof relies on results from algebraic complexity theory \cite{Bini1980,christandl2017tensor} and we include the argument here for the sake of completeness.

\begin{theorem}\label{thm:main}
  Let $\Psi$ and $\Phi$ be the entanglement structures obtained by placing $\psi$ and $\phi$,
  respectively, on $F$ faces of a lattice with $L$ sites. Assume $T$ can be represented by $\Phi$,
  \ie $\Phi \geq T$.
  If  $\psi \degengeq \phi$, then
  \begin{equation}\label{thm:main:I}
    T = \sum_{i=0}^{eF} {W_i},
  \end{equation}
  where each ${W_i}$ can be represented by $\Psi$, \ie $\Psi \ge {W_i}$. The number of terms in the
  representation is linear in $F$, \ie the constant $e$ is only dependent on the degeneration
  $\psi \degengeq \phi$. Moreover expectation values of an observable $O$ under $T$ can be computed
  from expectation values of $2eF+1$ states represented by $\Psi$:
  \begin{equation}\label{thm:main:II}
    \expval{O}{T} = \sum_{i=0}^{2eF} \gamma_i \expval{O}{V_i},
  \end{equation}
  where again each $V_i$ can be represented by $\Psi$, i.e.  $\Psi \geq {V_i} \ \forall i$, and $\gamma_i\in\R$ are known constants depending on $V_i$.
\end{theorem}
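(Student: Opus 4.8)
The plan is to lift the single-plaquette degeneration to a degeneration between the two entanglement structures, compose it with the given restriction $\Phi \ge T$, and then remove the $\varepsilon$-dependence by polynomial interpolation, at the cost of a number of superimposed terms that is only linear in $F$.

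First I would turn $\psi \degengeq^e \phi$ into $\Psi \degengeq^{eF} \Phi$. By \Cref{def:degeneration} there are maps $A_i(\varepsilon)$, polynomial in $\varepsilon$, with $(\bigotimes_i A_i(\varepsilon))\psi = \varepsilon^d \phi + \sum_{l=1}^e \varepsilon^{d+l}\widetilde\phi_l$. Placing a copy of this degeneration on every face and collecting, at each vertex $v$, the tensor product of the maps coming from the faces incident to $v$, I obtain vertex maps $B_v(\varepsilon)$, still polynomial in $\varepsilon$, with
\begin{equation*}
\qty(\bigotimes_{v} B_v(\varepsilon))\Psi = \bigotimes_{f=1}^F \qty(\varepsilon^d \phi + \sum_{l=1}^e \varepsilon^{d+l}\widetilde\phi_l) = \varepsilon^{dF}\Phi + \sum_{l=1}^{eF}\varepsilon^{dF+l}\,\Xi_l .
\end{equation*}
The leading term is $\Phi$ and the error degree is $eF$, since each of the $F$ factors contributes error degree $e$.

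Next I would compose with the restriction $\Phi \ge T$, given by fixed maps $C_v$ with $(\bigotimes_v C_v)\Phi = T$. Writing $T_l = (\bigotimes_v C_v)\Xi_l$ and $T_0 = T$, the polynomial vector $P(\varepsilon) = (\bigotimes_v C_v B_v(\varepsilon))\Psi$ factors as $P(\varepsilon) = \varepsilon^{dF} Q(\varepsilon)$ with $Q(\varepsilon) = \sum_{l=0}^{eF}\varepsilon^l T_l$ of degree $eF$ and $Q(0)=T$. For any $\varepsilon\ne 0$ the vector $Q(\varepsilon) = \varepsilon^{-dF}P(\varepsilon)$ is obtained from $\Psi$ by local maps, after absorbing the scalar $\varepsilon^{-dF}$ into one $C_vB_v(\varepsilon)$, so $\Psi \ge Q(\varepsilon)$. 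Evaluating $Q$ at $eF+1$ distinct nonzero nodes $\varepsilon_0,\dots,\varepsilon_{eF}$ and applying Lagrange interpolation to read off the constant coefficient yields $T = Q(0) = \sum_{i=0}^{eF}\lambda_i\, Q(\varepsilon_i)$; setting $W_i = \lambda_i Q(\varepsilon_i)$, again $\Psi$-representable, proves \eqref{thm:main:I}.

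For \eqref{thm:main:II} the naive route of expanding $\expval{O}{\sum_i W_i} = \sum_{i,j}\innerprod{W_i}{O W_j}$ produces quadratically many terms; keeping the count linear is the main obstacle, and is the point where the interpolation trick from algebraic complexity (à la \cite{Bini1980}) is essential. To do this I would take the interpolation nodes to be real and consider the \emph{single} scalar function $g(\varepsilon) = \expval{O}{Q(\varepsilon)}$. For real $\varepsilon$ the sesquilinear form is a genuine polynomial, $g(\varepsilon) = \sum_{l,l'}\varepsilon^{l+l'}\innerprod{T_l}{O T_{l'}}$, of degree $2eF$, whose constant term is exactly $g(0) = \expval{O}{T}$. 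Since $g(\varepsilon_i) = \expval{O}{V_i}$ for the $\Psi$-representable states $V_i = Q(\varepsilon_i)$, interpolating $g$ through $2eF+1$ real nodes gives $\expval{O}{T} = g(0) = \sum_{i=0}^{2eF}\gamma_i\, \expval{O}{V_i}$ with real coefficients $\gamma_i = \prod_{j\ne i}\tfrac{-\varepsilon_j}{\varepsilon_i-\varepsilon_j}$, which establishes \eqref{thm:main:II}. The essential insight is that interpolating the single degree-$2eF$ polynomial $g$, rather than the double sum defining the expectation value, is what turns the expected quadratic overhead into a linear one.
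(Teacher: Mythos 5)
Your proposal is correct and follows essentially the same route as the paper: lift the plaquette degeneration to $\Psi \degengeq^{eF}\Phi$ by tensoring over faces, compose with the restriction $\Phi\ge T$, and recover $T$ (respectively $\expval{O}{T}$) as the constant term of a degree-$eF$ (respectively degree-$2eF$, using real nodes) polynomial via Lagrange interpolation. The observation that one must interpolate the single scalar polynomial $\expval{O}{Q(\varepsilon)}$ rather than expand the double sum is exactly the point made around \eqref{eq:expValProof} in the paper's argument.
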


\begin{figure}
 \centering
  \includegraphics[width=0.45\textwidth]{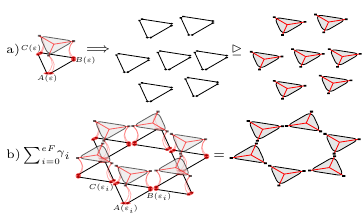}
  \caption{Graphical representation of Theorem \ref{thm:main}: a) A local de\-gen\-eration ($A(\varepsilon)$,$B(\varepsilon)$,$C(\varepsilon)$) depending polynomially on $\varepsilon$  from one plaquette state (pairwise entangled states between three parties) to another ($\lambda$ state), gives rise to a global degeneration between a collection of $F$ plaquette states. b) Evaluating the degeneration at $eF+1$ points $\varepsilon_i$, we can express the full entanglement structure built from the second plaquette state (here $\lambda$ states) as a superposition of $eF+1$ states that arise as restrictions from the first entanglement structure (here pairwise entangled states between three parties). The parameter $e$ is a scaling factor depending on the polynomial degree of the local degeneration, the prefactor $\gamma_i$  is obtained by evaluating the $i$th Lagrange polynomial $\ell_i$ at $0$.
  }
  \label{fig:mainresKago}
\end{figure}

\begin{proof}
  According to Definition~\ref{def:degeneration}, $\psi \degengeq \phi$ if there exist linear maps
$A_i(\epsilon):\C^{d_i}\mapsto\C^{d'_i}$, depending polynomially on the parameter $\epsilon$, such that
  \[
    (A_1(\epsilon)\otimes \cdots \otimes A_m(\epsilon))\psi = \varepsilon^d \phi +
    \sum_{l=1}^e \epsilon^{d+l} \widetilde{\phi}_l,
  \]
  for some multi-partite states $\widetilde{\phi}_l$, approximation degree $d$ and error degree $e$.
We observe that the plaquette degeneration
  $\phi \degengeq \psi$ immediately implies that
  $\phi^{\otimes F} \degengeq \psi^{\otimes F}$, as can be seen by taking the tensor
  product of the local operators given by the degeneration $\phi \degengeq \psi$. As was already
  observed in \cite[Prop. 4]{christandl2017tensor}, the error degree will only grow linearly in the
  number of copies of the degeneration maps, \ie the number of faces $F$ in the lattice and therefore we see that the product of $F$
  copies of $\phi$ degenerates to $F$ copies of $\psi$ with error degree $eF$:
  \begin{equation}\label{eq:deg-copies}
  \underbrace{\psi \otimes \psi \otimes \cdots \otimes \psi}_{F \text{ copies}}
  \degengeq^{e F} \varphi \otimes \varphi
\otimes \cdots \otimes \varphi.
\end{equation}

This degeneration is possible, when all $m$ parties of each of the $F$ copies are considered
independently, \ie when the states in \eqref{eq:deg-copies} are regarded are $m F$-partite
states. In \cite{christandl2017tensor} this was derived in order to show that tensor rank is
strictly submultiplicative under the tensor product.  Note that the degeneration resulting from grouping all the $F$
copies of $\psi$ and $\phi$ into an $m$-tensor was already considered in \cite{Bini1980}, and led to
faster algorithms for matrix multiplication.  In order to prove the theorem, we will
consider instead a different consequence of this argument: grouping the $m F$ tensor factors
according to the underlying lattice, we obtain $\Psi$ and $\Phi$ as $L$-partite states respectively,
which means that
\begin{equation}\label{eq:deg-structures}
  \psi \degengeq^e \phi \implies \Psi \degengeq^{e F} \Phi.
\end{equation}

Similar as in \cite{Bini1980,christandl2017tensor}, we now apply Lagrange interpolation \cite[p. 260]{jeffreys} in order to
transform the degeneration into a restriction. From \eqref{eq:deg-structures}, we can write
\begin{equation}
  \qty(\bigotimes_{i=1}^{F}\left( \bigotimes_{l=1}^m A_l(\varepsilon) \right) )\Psi = \varepsilon^{d} \Phi +
  \sum_{k=1}^{e F} \varepsilon^{d + k} \widetilde{\Phi}_k\,
\end{equation}
for some integer $d$, where the linear maps $A_l(\varepsilon)$, depending polynomially on $\varepsilon$,
are given by copies of the degeneration maps of $\psi \degengeq \phi$ corresponding to the plaquettes $f=1,\dots, F$.
Let $(B_l)_l$ be the local operators given by the restriction $\Phi \ge T$ at the lattice sites $l=1,\dots L$, \ie
$\qty(\bigotimes_{i=1}^L B_l) \Phi = T$. Composing \eqref{eq:deg-structures} with the $(B_l)_l$ and
dividing by $\epsilon^{d}$, we define
\begin{equation}\label{eq:t-eps}
  T(\epsilon) = \epsilon^{-d}  \qty(\bigotimes_{i=1}^{L} B_l ) \qty(\bigotimes_{i=1}^{F}\left( \bigotimes_{l=1}^m A_l(\varepsilon) \right) )\Psi
   =  T +
  \sum_{k=1}^{e F} \varepsilon^{k} \widetilde{T}_k .
\end{equation}
Considering the right hand side, we immediately see that  $T(\epsilon)$ depends polynomially on $\epsilon$ with degree $e F$, and that $T(0)=T$. Moreover, for each $\epsilon \neq 0$, $T(\epsilon)$ is a restriction from
$\Psi$. Evaluating $T(\epsilon)$ at $e F +1$ points $(\epsilon_i)_{i=0}^{e F}$, we can obtain the value at $\epsilon =0 $ via Lagrange interpolation:
\[
T = T(0) = \sum_{i=0}^{e F} \gamma_i T(\epsilon_i) \;,
\]
where $\gamma_i = \ell_i(0)$ is obtained by evaluating the $i$th Lagrange polynomial $\ell_i$
at $0$.
Defining $W_i = \gamma_i T(\epsilon_i)$, we obtain \eqref{thm:main:I}. In order to prove
\eqref{thm:main:II}, we observe that any expectation value with respect to ${T(\epsilon)}$ is given by
\begin{align}\label{eq:expValProof}
  \expval{O}{T(\epsilon)} =
  \expval{O}{T} +  \sum_{k=1}^{e F} \qty( \innerprod{T}{O \tilde T_k}\varepsilon^k +  \innerprod{\tilde
    T_k}{O T} (\overline{\epsilon})^{k} )
  + \sum_{k,k'=1}^{e F} \innerprod{\tilde T_{k'}}{O \tilde T_k} (\overline{\epsilon})^{k} \varepsilon^{k'}.
\end{align}
In case $\varepsilon\in\R$ this is again a polynomial in $\varepsilon$ now of degree $2e F$. Similarly as before, computing $\expval{O}{T(\epsilon)}$ for a fixed $\epsilon\in\R$ amounts to
computing an expectation value for a state $T(\epsilon)$ which has a representation in terms of
$\Psi$. Computing $2e F+1$ of such expectations values is then again sufficient for computing $\expval{O}{T(\epsilon)}$ at
$\epsilon =0$ via interpolation, this proves \eqref{thm:main:II}.
\end{proof}

Let us note that we are not limited to use $\varepsilon\in\R$, but can also choose $\varepsilon\in\C$ if that is more convenient. To this end, let us consider the expression
\[
   \innerprod{T(\overline{\varepsilon})}{O T(\varepsilon)} = \expval{O}{T} +  \sum_{k=1}^{e F} \qty( \innerprod{T}{O \tilde T_k} +  \innerprod{\tilde
    T_k}{O T} )\varepsilon^k
  + \sum_{k,k'=1}^{e F} \innerprod{\tilde T_{k'}}{O \tilde T_k} \varepsilon^{k'+k}.
\]
This is by design again a $2 F$ degree polynomial in $\varepsilon$ with the expectation value $\expval{O}{T}$ in leading order. Hence, computing the scalar product $\innerprod{T(\overline{\varepsilon}_i)}{O T(\varepsilon_i)}$ for $2F+1$ different values of $\varepsilon_i$ will again allow us to compute the value of this polynomial at $\varepsilon=0$ and therefore $\expval{O}{T}$. Alternatively, we could also just insert $\varepsilon$ directly into \eqref{eq:expValProof} and treat $\Re(\varepsilon)$ and $\Im(\varepsilon)$ as independent variables.

We also note that the due to the reduced bond dimension for each of the $V_i$ also the error caused by approximate contraction will be smaller and that by oversampling the number of evaluation points in the degeneration, there is an additional potential for improving the accuracy of the contraction.

Theorem \ref{thm:main} provides degenerations that transform the two entanglement structures into each other plaquette by plaquette. Constructions based on larger units (e.g. several plaquettes) might lead to further reductions in the bond dimension, since the maps on the vertices that are grouped together no longer have a tensor product constraint.

Before looking more generally on plaquette conversions between important classes
of tensor networks in the next section, as a first application of the theorem we
come back to the RVB state on the kagome lattice in terms of the $\Lambda$
entanglement structure introduced at the end of section
\ref{sec:entanglement-structures}. We have presented a degeneration from
$\MAMUpicture[0.8]{2}{2}{2}$ to the plaquette tensors \lambdapicture[0.9] of
$\Lambda$, which has approximation degree $d$ and error degree $e$ both equal to
$2$. Rolling this out on the kagome lattice with $F$ triangles, we obtain a
border PEPS representation of the RVB state of border bond dimension $2$ and
Theorem~\ref{thm:main} then ensures that we can reconstruct the RVB state as a
superposition of $2F+1$ PEPS of bond dimension $2$ or compute expectation values
with $4F+1$ contractions.

\section{Plaquette conversions}
\label{sec:plaquette-conversions}
In this section, we present general strategies and examples for optimized conversion between plaquette states in terms of degenerations. To this end, we consider $m$-tensors, \ie elements of $\bigotimes_{i=1}^m \C^{d_i}$, for
some non-zero integers $(d_i)_i$, which can be equivalently seen as unnormalized pure $m$-partite quantum
states. We will usually consider $m$ to be a small integer (often $m$ will be equal to 3 or 4), as
these $m$-tensors are the building blocks of the entanglement structures we considered in
\Cref{sec:entanglement-structures}. After some definitions and examples that set the scene, we will study the conversion between maximally entangled states shared around circles and GHZ states, which are the basis for conversion between PEPS and more general tensor network states.
To do this, we utilize the correspondence between entangled pairs on the circle and the matrix multiplication tensor (see e.g. \cite{chitambar2008tripartite}).
 This will be first done for 3-party tensors and subsequently for tensors of $m$ parties. In addition,
we prove in Section~\ref{sec:LambdNoBd2} that the MPS representation with bond dimension $(2,2,3)$ for the state $\lambda$, which is the basis for the PEPS representation of the RVB state, is optimal.

\subsection{From \texorpdfstring{$\MAMU[m]{k}$}{MAMU[m](k)} to \texorpdfstring{$\GHZ[m]{k}$}{GHZ[m](k)}: the case \texorpdfstring{$m=3$}{m=3}}

\label{sec:Tri:at}
The aim of this section is to investigate restrictions and degenerations from $\MAMU{k_1,k_2,k_3}$ to
$\GHZ[3]{k}$ and viceversa: this will allow us the express GHZ based hypergraph entanglement structures on triangular lattices as
bond and border bond PEPS representations. In particular, we will prove the following proposition.
\begin{proposition}\label{prop:ghz-bond}
  \begin{equation}\label{eq:ghz-bond-1}
    \frac{1}{2}\qty(1 + \sqrt{4k-3}) < \bond(\GHZ[3]{k}) \le \order{
      k^{\frac{1}{2}+ \frac{c}{\sqrt{\log k}}} },
  \end{equation}
  for some fixed positive $c$.
In other words,
  \begin{equation}\label{eq:ghz-bond-2}
    \MAMU[3]{n} \not\ge \GHZ[3]{n^2-n+1} \qq{and} \MAMU[3]{n} \ge \GHZ[3]{f(n)}
  \end{equation}
  where $f(n) = \order{(n^2)^{1-\frac{c'}{\sqrt{\log n}}}}$ for some positive constant $c'$.
\end{proposition}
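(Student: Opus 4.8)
The plan is to recognize that both inequalities are statements about the \emph{subrank} of the matrix multiplication tensor. Since $\MAMU[3]{n} = \tensor{n,n,n}$ is the $n\times n$ matrix multiplication tensor and $\GHZ[3]{k} = \tensor{k}$ is the unit tensor, \Cref{def:bond} gives $\bond(\GHZ[3]{k}) = \min\{n : \MAMU[3]{n}\ge\GHZ[3]{k}\}$, so the two halves of \eqref{eq:ghz-bond-1} are exactly an upper and a lower bound on the largest $r$ with $\MAMU[3]{n}\ge\GHZ[3]{r}$. First I would make the restriction concrete: dualizing the three local maps of \Cref{def:restriction}, a restriction $\MAMU[3]{n}\ge\GHZ[3]{r}$ is equivalent to the existence of $n\times n$ matrices $(A_a,B_a,C_a)_{a=1}^r$ with $\tr(A_a B_b C_c)=1$ if $a=b=c$ and $0$ otherwise, because contracting the three factors of $\MAMU[3]{n}$ against matrices reproduces the trilinear form $(X,Y,Z)\mapsto\tr(XYZ)$. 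Everything then reduces to bounding the maximal such $r$.

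For the \textbf{lower bound on $\bond$} (equivalently $\MAMU[3]{n}\not\ge\GHZ[3]{n^2-n+1}$) I would show the matrix system above cannot have $r=n^2-n+1$. The diagonal products $M_a:=A_aB_a$ satisfy $\tr(M_aC_c)=\delta_{ac}$, so $\{M_a\}$ and $\{C_c\}$ are trace-biorthogonal, hence each linearly independent, which already yields the crude bound $r\le n^2$. To recover the missing additive $n$ I would use the off-diagonal relations $\tr(A_aB_bC_c)=0$ for $a\neq b$: these say that every product $A_aB_b$ with $a\neq b$ lies in the trace-orthogonal complement $\mathcal{C}^\perp=(\linspan\{C_c\})^\perp$, of dimension $n^2-r$. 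When $r=n^2-n+1$ this complement has dimension $n-1$, so the off-diagonal products are confined to a very small space; combining this confinement with the classical theorem that a linear subspace of $\C^{n\times n}$ of purely singular matrices has dimension at most $n^2-n$ (Flanders--Dieudonn\'e), I would extract the contradiction. \emph{This sharpening is the main obstacle}: the biorthogonality is elementary and universally gives only $n^2$, whereas isolating the extra factor of $n$ genuinely uses the multiplicative structure of the $A_aB_b$ together with a determinantal dimension count; the mechanism is cleanest in the base case $n=2$, where $\mathcal{C}^\perp$ is one-dimensional, all off-diagonal products are parallel, the $A_a$ are forced to share a common column space, and the bound $n^2-n=2$ is already tight.

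For the \textbf{upper bound on $\bond$} (equivalently $\MAMU[3]{n}\ge\GHZ[3]{f(n)}$ with $f(n)=\order{(n^2)^{1-c'/\sqrt{\log n}}}$) I would exhibit an explicit restriction realizing a diagonal of size $n^{2-o(1)}$ inside the support of $\MAMU[3]{n}$. Indexing the matrix indices by a cyclic group $\Z/N$ and selecting a subset $S\subseteq\Z/N$ \emph{free of $3$-term arithmetic progressions}, the triples of the matrix multiplication tensor constrained by $S$ form a diagonal that is \emph{isolated}: $3$-AP-freeness is precisely what forbids the spurious support terms that would otherwise spoil the diagonalization, turning the combinatorial diagonal into an honest restriction $\MAMU[3]{n}\ge\GHZ[3]{r}$ with $r=n^{2-o(1)}$. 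Behrend's construction supplies such an $S$ with $|S|\ge N\,e^{-O(\sqrt{\log N})}$, and it is exactly this density that produces the exponent $1-c'/\sqrt{\log n}$ (equivalently the $c/\sqrt{\log k}$ correction). The bulk of the work here is the bookkeeping verifying that the arithmetic-progression-free condition removes all interfering contributions.

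Finally I would convert the two resulting bounds on $r$ back into \eqref{eq:ghz-bond-1}: the elementary equivalence $n>\tfrac12\bigl(1+\sqrt{4k-3}\bigr)\Leftrightarrow k\le n^2-n$ turns $r\le n^2-n$ into the stated lower bound on $\bond(\GHZ[3]{k})$, and inverting $f(n)=(n^2)^{1-c'/\sqrt{\log n}}$ (using $\log n\approx\tfrac12\log k$) gives the matching upper bound $\order{k^{1/2+c/\sqrt{\log k}}}$.
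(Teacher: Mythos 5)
Your reduction of $\bond(\GHZ[3]{k})$ to the maximal size of a ``diagonal'' $\tr(A_aB_bC_c)=\delta_{abc}$ realizable with $n\times n$ matrices is exactly the paper's starting point, and your final translation between $k\le n^2-n$ and the $\frac{1}{2}(1+\sqrt{4k-3})$ form is fine. The problem is the lower bound's central step, which you yourself flag as ``the main obstacle'' and then do not close. Biorthogonality of $\{A_aB_a\}$ and $\{C_c\}$ gives $r\le n^2$; to reach $r\le n^2-n$ you propose confining the off-diagonal products $A_aB_b$ ($a\ne b$) to the $(n^2-r)$-dimensional trace-annihilator of $\linspan\{C_c\}$ and then invoking the Dieudonn\'e--Flanders bound on singular subspaces. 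But the off-diagonal products are not singular in any evident way, so the singular-subspace theorem does not attach to your confinement as stated; the $n=2$ case you verify does not indicate how the general mechanism works. (The confinement itself is a genuinely strong constraint -- one can push it to show every $A_a$ has rank one when $r=n^2-n+1$ -- but that still leaves a nontrivial endgame which you have not supplied.) The paper's route is different: Dieudonn\'e's theorem is used only to arrange, after a unitary change of physical basis, that some $A_0$ is invertible; the SVD of $A_0$ then defines an inner product on $\mathcal{M}_D$ under which $\pi(b_j)=B_j^*V$, $\pi(c_k)=C_kU$ form an orthogonal representation of the complete bipartite graph minus one edge, $\grph[K]^0_{n^2-n+1,n^2-n+1}$, and a short dimension-count lemma (\Cref{lemma:graph-repr-4}) forces either a rank deficiency in $\linspan\{B_j\}$ or $\linspan\{C_k\}$ (contradicting linear independence) or $\tr A_0B_0C_0=0$ (contradicting the diagonal condition). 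You would need either to adopt that argument or to actually complete your own.

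Your upper bound, by contrast, is a correct and genuinely different route. Zeroing out matrix entries $x_{ij}$, $y_{jk}$, $z_{ki}$ according to whether the index differences lie in a Behrend set $S\subseteq\Z/N$ free of three-term arithmetic progressions is a monomial restriction, and AP-freeness guarantees the surviving $n\abs{S}$ terms of $\MAMU[3]{n}$ form an isolated diagonal, hence $\MAMU[3]{n}\ge\GHZ[3]{r}$ with $r=n\abs{S}\ge n^2e^{-O(\sqrt{\log n})}=(n^2)^{1-c'/\sqrt{\log n}}$ directly as a restriction. The paper instead starts from Strassen's \emph{border} subrank degeneration $\MAMU{n}\degengeq^{\gamma n^2}\GHZ{\lceil 3n^2/4\rceil}$, tensors $\alpha$ copies, converts the degeneration to a restriction by Lagrange interpolation at the cost of an auxiliary $\GHZ{\alpha\gamma n^2+1}$ factor, and optimizes over $\alpha$ to recover the same $k^{1/2+c/\sqrt{\log k}}$ exponent. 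Your construction avoids the degeneration-to-restriction conversion and the optimization over tensor powers entirely, at the price of the combinatorial bookkeeping; both yield the same asymptotics, and since the proposition only claims a restriction, your version is if anything more direct -- provided you actually verify the isolation of the diagonal rather than asserting it.
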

However, as it was shown by Strassen in \cite[Thm. 6.6]{strassen1987relative}, there exist degenerations which allow for an
MPS representation of $\GHZ[3]{\ceil{\frac{3}{4}n^2}}$ with border bond dimension $n$. Setting $n=2$, this shows in
particular, that
\[
  \MAMUpicture[1.5]{2}{2}{2} \not\ge \GHZpicture[1.5]{3} \qq{but}
  \MAMUpicture[1.5]{2}{2}{2} \degengeq \GHZpicture[1.5]{3}.
\]
Hence, $\bond(\GHZ[3]{3})>2$, whereas $\borderbond(\GHZ[3]{3})=2$.

Before giving the proof, we discuss a non-symmetric
extension of this result, i.e. degenerations from $\MAMU{k_1,k_2,k_3}$ with
different values of $k_1$, $k_2$, $k_3$.
Following \cite{Vrana2016}, we consider the local diagonal operator
\begin{equation}
  A(\epsilon)\ket{i,j} = \epsilon^{(i-g)^2 + 2 i j}\ket{i,j}
\end{equation}
depending on an integer $g$ which we will fix later. This leads to the transformation
\begin{align*}
( A(\varepsilon)\otimes A(\varepsilon)\otimes A(\varepsilon) ) \MAMU{k_1,k_2,k_3} &=
                                                \epsilon^{2g^2} \sum_{i_1,i_2,i_3=1}^{k_1,k_2,k_3}
                                                \epsilon^{(i_1+i_2+i_3 - g)^2}
                                                \ket{i_1,i_2,}\ket{i_2,i_3}\ket{i_3,i_1}
  \\
                                              &=\epsilon^{2g^2} \sum_{\substack{i_1,i_2,i_3=1\\i_1+i_2+i_3=g}}^{k_1,k_2,k_3}\ket{i_1,i_2}\ket{i_2,i_3}\ket{i_3,i_1} + \order{\epsilon^{2g^2+1}}\;.
\end{align*}
The leading order term in $\epsilon$ corresponds to a GHZ state, because fixing any pair of $i_1$, $i_2$, $i_3$ determines
the third one uniquely. Hence, we only have to determine the number of solutions to the equation
$i_1+i_2+i_3 = g$ for given $n_i$ and inhomogeneity $g$. Choosing $k_1=2$, $k_3=3$ and $k_2=2$ or
$k_2=3$ and $g=5$ then directly leads to
\[
  \MAMUpicture[1.5]{2}{2}{3} \degengeq \GHZpicture[1.5]{4} \qq{and}
  \MAMUpicture[1.5]{2}{3}{3} \degengeq \GHZpicture[1.5]{5}.
\]

These degenerations are optimal, both in the sense that the corresponding restrictions are not
possible, and in the sense that we cannot obtain GHZ states with more levels from a degeneration of
these MaMu tensors. It is also not possible to obtain the same GHZ states from MaMu tensors, where one of the bond dimension is
smaller than the ones we have considered.

We will now turn to the proof of \Cref{prop:ghz-bond}. We will first introduce two definitions and
prove a lemma.

\begin{definition}
  Let $\grph=(V,E)$ be a graph. An \emph{orthogonal representation} of $\grph$ is a mapping
  \[ \pi: V \to \hs \setminus\{0\},\]
  from the graph into some inner product vector space $\hs$ such that
  \begin{equation*}
    (u,v) \in E \implies  \innerproduct{\pi(u)}{\pi(v)}_{\hs} = 0.
  \end{equation*}
  We will denote by $\dim \hs$ the dimension of the orthogonal representation.
\end{definition}

\begin{definition}
Let $\grph[K]_{n,n}$ be the complete bipartite graph on $2n$ vertices, \ie
  \begin{align*}
    V(\grph[K]_{n,n}) &= \{b_0,\dots, b_{n-1}, c_0,\dots, c_{n-1}\},\\
    E(\grph[K]_{n,n}) &= \{ (b_j,c_k) \,|\, j,k=0,\dots,n-1 \}.
  \end{align*}
  Let $\grph[K]^0_{n,n}$ be the graph obtained by removing the edge $(b_0,c_0)$ from
  $\grph[K]_{n,n}$:
  \[ E(\grph[K]^0_{n,n}) =  E(\grph[K]_{n,n})\setminus\{(b_0,c_0)\} = \{ (b_j,c_k) \,|\, j,k=0,\dots,n-1, j\neq k \text{ or } j = k \neq 0 \}. \]
\end{definition}

\begin{lemma}\label{lemma:graph-repr-4}
  With the notation defined above, let  $\pi: \grph[K]^0_{n,n} \to \hs$ be an orthogonal
  representation such that $\dim \hs \le 2(n-1)$. Then at least
  one of the following holds
  \begin{enumerate}
  \item $\dim \linspan \{ \pi(b_i) \,|\, i=1,\dots,n-1 \} < n-1 $,
  \item $\dim \linspan \{ \pi(c_i) \,|\, i=1,\dots,n-1 \} < n-1 $,
  \item $\pi(b_0)$ is orthogonal to $\pi(c_0)$.
  \end{enumerate}
\end{lemma}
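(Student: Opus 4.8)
The plan is to argue by contradiction: I would assume that none of the three alternatives holds and show that $\pi(b_0)$ and $\pi(c_0)$ are then forced to be orthogonal, contradicting the failure of the third item. So suppose the first two items fail, i.e.\ the vectors $\pi(b_1),\dots,\pi(b_{n-1})$ are linearly independent and likewise $\pi(c_1),\dots,\pi(c_{n-1})$. Write $B = \linspan\{\pi(b_i)\,|\,i=1,\dots,n-1\}$ and $C = \linspan\{\pi(c_i)\,|\,i=1,\dots,n-1\}$, so that $\dim B = \dim C = n-1$, and suppose further that $\innerproduct{\pi(b_0)}{\pi(c_0)}\neq 0$.

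First I would unpack the orthogonality relations carried by the representation. Since the only edge absent from $\grph[K]^0_{n,n}$ is $(b_0,c_0)$, we have $\pi(b_j)\perp\pi(c_k)$ for every pair $(j,k)\neq(0,0)$. In particular, for each $j\ge 1$ the vector $\pi(b_j)$ is orthogonal to all of $\pi(c_0),\dots,\pi(c_{n-1})$, and for each $k\ge 1$ the vector $\pi(c_k)$ is orthogonal to all of $\pi(b_0),\dots,\pi(b_{n-1})$. Reading these two families as statements about subspaces, $B$ is orthogonal to $\linspan\{\pi(c_0)\}+C$ and $C$ is orthogonal to $\linspan\{\pi(b_0)\}+B$; in particular $B\perp C$.

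The key step is a tight dimension count. Because $B\perp C$ and the inner product is definite, $B\cap C=\{0\}$, so $\dim(B+C)=\dim B+\dim C=2(n-1)$. Since $B+C\subseteq\hs$ while $\dim\hs\le 2(n-1)$ by hypothesis, this forces $\hs = B\oplus C$ as an orthogonal direct sum, whence $C^\perp=B$ and $B^\perp=C$ inside $\hs$. Now I locate the two leftover vertices: the relations above give $\pi(b_0)\perp C$, so $\pi(b_0)\in C^\perp=B$, and symmetrically $\pi(c_0)\perp B$, so $\pi(c_0)\in B^\perp=C$. But $\pi(b_0)\in B$ and $\pi(c_0)\in C$ with $B\perp C$ yield $\innerproduct{\pi(b_0)}{\pi(c_0)}=0$, i.e.\ the third item holds after all, contradicting our assumption.

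I expect the only delicate point to be the bookkeeping of the orthogonality relations, so that the single missing edge $(b_0,c_0)$ is correctly excluded while every other incidence is exploited; everything else rests on the exact equality $2(n-1)=\dim\hs$, which is precisely what pins down the orthogonal decomposition $\hs=B\oplus C$ and thereby places $\pi(b_0)$ in $B$ and $\pi(c_0)$ in $C$. Beyond this elementary linear algebra there is no genuine analytic or combinatorial obstacle.
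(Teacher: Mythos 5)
Your proof is correct and follows essentially the same route as the paper's: both define $\mcl B$ and $\mcl C$, use the orthogonality $\mcl B \perp \mcl C$ together with the dimension bound to force $\hs = \mcl B \oplus \mcl C$ when items 1 and 2 fail, and then place $\pi(b_0)$ in $\mcl B$ and $\pi(c_0)$ in $\mcl C$ to conclude. The only cosmetic difference is that you phrase it as a contradiction while the paper does a direct case split, and you spell out the definiteness needed for $\mcl B \cap \mcl C = \{0\}$, which is a harmless extra precision.
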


\begin{proof}

  Let $\mcl B =  \linspan \{ \pi(b_i) \,|\, i=1,\dots,n-1 \}$ and $\mcl C = \linspan \{ \pi(c_i) \,|\,
  i=1,\dots,n-1 \}$. Since $\pi(b_i)$ is orthogonal to $\pi(c_j)$ for every $i,j=1,\dots,n-1$, we
  have that $\mcl B \perp \mcl C$. If $\mcl B \oplus \mcl C$ is not equal to $\mcl H$, which has
  dimension $\le 2(n-1)$, then at least one of the two has to have dimension strictly smaller than $n-1$,
  so that either \emph{1.} or \emph{2.} holds.
  If not, then $\mcl H = \mcl B \oplus \mcl C$. Since $\pi(b_0)$ is orthogonal to every $\pi(c_i)$
  for $i=1,\dots,n-1$, it is orthogonal to $\mcl C$, and therefore $\pi(b_0) \in \mcl B$. Similarly,
  $\pi(c_0)$ is orthogonal to $\mcl B$ and therefore lies in $\mcl C$. But then $\pi(b_0)$ and
  $\pi(c_0)$ live in orthogonal subspaces and they are themselves orthogonal.
\end{proof}

We are now ready to prove \Cref{prop:ghz-bond}.
\begin{proof}[Proof (\Cref{prop:ghz-bond}).]
  We will start by proving the lower bound of \eqref{eq:ghz-bond-1} as well as first part of
  \eqref{eq:ghz-bond-2}, since they are equivalent as can be seen by setting $k=n^2-n+1$.
  Let us assume that $\GHZ{n^2-n+1}$ has an MPS representation with bond dimension $D\le n$, and let
  us show how to derive a contradiction from this fact. To fix notation, let
  \[ \GHZ{n^2-n+1}  = \sum_{i,j,k=0}^{n^2-n} \tr[A_iB_jC_k] \ket{i,j,k} ,\]
  for some non-zero matrices $\{A_i\}_i, \{B_j\}_j$ and $\{C_k\}_k$ of dimension $D\times D$, such that
  \[ \tr A_i B_j C_k = \begin{cases} 1 \qq{if} i=j=k,\\ 0 \qq{otherwise.}
    \end{cases} \]

  We start by showing that if $D\le n$ we can without loss of generality assume that $A_0$ is
  non-singular. To derive this, we will use the following fact: any linear subspace of $ \mcl M_D$
  containing only singular matrices has dimension at most $D^2-D$ \cite{Dieudonne1948}. Consider
  $S = \linspan\{A_i \,|\, i=0,\dots,n^2-n\} \subset \mcl M_D$.  $S$ is the span of $n^2-n+1$
  matrices: if it contains only singular matrices, then its dimension can be at most $D^2-D$. So if
  $D\le n$, either in $S$ there is one matrix which has full rank or $\dim S \le D^2-D \le n^2-n$,
  which implies that the matrices $(A_i)_i$ are not linearly independent.

  Let $W = (w_{ij})\in U(n^2-n+1)$ a unitary matrix such that $\sum_{i=0}^{n^2-n} w_{0i}A_i$ is either zero or
  full rank.  Then by denoting ${\phi_i} = W\ket{i}$ the rotated basis, we see that
  $(W\otimes W \otimes W) \GHZ{n^2-n+1} = \sum_i {\phi_i}\otimes{\phi_i}\otimes{\phi_i}$ has an MPS
  representation with matrices
  \[ \tilde A_i = \sum_{j} w_{i,j} A_j,\quad\tilde B_i = \sum_{j} w_{i,j} B_j,\quad\tilde C_i =
    \sum_{j} w_{i,j} C_j, \]
  and $\tilde A_0$ is either zero or full-rank. The first case we can exclude, because $\tr \tilde A_0 \tilde
  B_0 \tilde C_0 = 1$. This shows that up to a local unitary on the physical level, we can assume without loss
  of generality that $A_0$ is not singular.

  Let $A_0 = U\Sigma V^*$ be the singular-value decomposition of
  $A_0$. Then $\Sigma >0$ defines a scalar product on $\mcl M_D \simeq \C^{D^2}$ by
  $\innerproduct{X}{Y}_\Sigma = \tr \Sigma X^* Y$. Defining
  \[ \pi(b_j) = B_j^*V, \quad \pi(c_k) = C_kU, \quad j,k=0\dots n^2-n, \]
  we obtain an orthogonal representation of the graph $\grph[K]^0_{n^2-n+1,n^2-n+1}$ (defined in
  \Cref{lemma:graph-repr-4}) on
  $\mcl M_D$ with inner product $\innerproduct{\cdot}{\cdot}_\Sigma$, since
  \[ \innerproduct{\pi(b_j)}{\pi(c_k)}_\Sigma = \tr \Sigma V^* B_j C_k U = \tr A_0 B_j C_k=
    \begin{cases} 1 \qq{if} j=k=0,\\
      0 \qq{otherwise.}
    \end{cases}
  \]
  If $D\le n$, then $\dim \mcl M_D=D^2\le n^2 < n^2 + (n-1)^2 +1 = 2(n^2-n+1)$, which implies that we
  can apply \Cref{lemma:graph-repr-4} and at least one of the conditions stated in it must hold
  true. If \emph{1.} or \emph{2.} hold, then either $\linspan \{B_i\}$ or $\linspan \{C_i\}$ has
  dimension strictly smaller than $n^2-n+1$, but we have already seen that this leads to a contradiction.
  Therefore \emph{3.} must hold, but this also leads to a contradiction: on the one hand we have
  proven that $\tr A_0B_0C_0 = 0$ but we also know that know that
  $\tr A_0B_0C_0 = \innerproduct{\pi(b_0)}{\pi(c_0)} = 1$.

  We will now prove the upper bound of \eqref{eq:ghz-bond-1}. Our starting point is the following result \cite[Thm. 6.6]{strassen1987relative}
  \begin{equation}\label{eq:strassen-deg}
    \MAMU{n} \degengeq^{\gamma n^2}  \GHZ{\left\lceil 3 n^2/4 \right\rceil} ,
  \end{equation}
  for some constant $\gamma > 0$. Let $\alpha$ an integer to be determined later, and consider the
  tensor product of $\alpha$ copies of \eqref{eq:strassen-deg}.  To simplify notation, we set
  $k=(\lceil \frac{3}{4}n^2 \rceil)^\alpha$, so that we get
  \[ \MAMU{n^\alpha} \degengeq^{\alpha \gamma n^2} \GHZ{k} .\]
  As we have discussed previously, it is a well known result in algebraic complexity
  theory that a degeneration can be turned into a restriction by interpolation paying a price in terms of a direct sum (see e.g. \cite{Burgisser1997}). In the present context, this means that we can turn the degeneration into a restriction by supplementing a GHZ state with a number of levels equal to the error degree plus one (see e.g. \cite{christandl2017tensor}).
  Therefore
  we obtain
\[
\GHZ{\alpha\gamma n^2 +1} \otimes \MAMU{n^\alpha} \ge \GHZ{k},
\]
from which follows that
\[
\bond(\GHZ{k}) \le n^\alpha \bond(\GHZ{\alpha\gamma n^2 +1}).
\]
We can trivially bound $\bond(\GHZ{\alpha\gamma n^2 +1})$ by $2\alpha\gamma n^2$,
\begin{equation}\label{eq:ghz-bond-trivial-bound}
\bond(\GHZ{k}) \le 2\alpha\gamma n^{\alpha +2}.
\end{equation}
From the definition of k, we have  $n\le\qty(\frac{4}{3})^{\frac{1}{2}} k^{\frac{1}{2\alpha}}$, and by inserting this into the right-hand side of \eqref{eq:ghz-bond-trivial-bound}, we obtain:
\begin{equation}\label{eq:ghz-bond-good-bound}
  \bond(\GHZ{k}) \le
  2\gamma\alpha \qty(\frac{4}{3})^{1+\frac{\alpha}{2}} k^{\frac{1}{2}+\frac{1}{\alpha}}
  .
\end{equation}

 We now want to choose $\alpha$ in order to minimize the right hand side. We will instead simply
 minimize $\qty(\frac{4}{3})^{\frac{\alpha}{2}}k^{\frac{1}{\alpha}}$, as this will already give the
 right asymptotic scaling. Since the function diverges to infinity when $\alpha$ tends to zero or to
 infinity, we find the minimum by setting the derivative of
 $\frac{\alpha}{2}\log(\frac{4}{3}) + \frac{1}{\alpha} \log k $ to zero:
 \[
   \frac{1}{2}\log(\frac{4}{3}) - \frac{1}{\alpha^2} \log k = 0 \iff \alpha = \alpha^* = \frac{\sqrt{2}}{\log^{1/2}(4/3)} \log^{1/2}(k)
 \]
 Taking $\alpha = \lfloor  \alpha^* \rfloor$, we obtain
 \begin{equation}
   \bond(\GHZ{k}) \le \frac{8}{3}\frac{\sqrt{2}}{\log^{1/2}(4/3)}\gamma k^{\frac{1}{2} +
     \frac{\sqrt{2}\log^{1/2}(4/3)}{\log^{1/2}k} + \frac{\log\log k}{2\log k}}.
 \end{equation}
 Since $\log^{1/2}(k)\geq\log\log k$, we can find $c$ positive as claimed in \eqref{eq:ghz-bond-1}.
 We can improve this bound by minimizing the right hand side of \eqref{eq:ghz-bond-good-bound} instead, obtaining
\begin{align*}
  \bond(\GHZ{k}) &\le \frac{8 \gamma}{3 \log(4/3)}k^{\frac{1}{2} +  \frac{\sqrt{1+2\log(\frac{4}{3})\log(k)}}{\log(k)} + \frac{\log(-1 + \sqrt{1+2\log(\frac{4}{3})\log(k)})}{\log(k)}}\\
  &= \frac{8 \gamma}{3 \log(4/3)} k^{\frac{1}{2}} e^{\sqrt{1+2\log(4/3)\log(k)}}(-1 +\sqrt{1+2\log(4/3)\log(k)})
\end{align*}
Note that the asymptotic scaling of this bound is the same as the one we had obtained by minimizing
$\qty(\frac{4}{3})^{\frac{\alpha}{2}}k^{\frac{1}{\alpha}}$, as we claimed.

 To get the second part of \eqref{eq:ghz-bond-2}, let instead
\[m = \alpha n^{\alpha+2} \le \alpha
 \qty(\frac{4}{3})^{1+\frac{\alpha}{2}} k^{\frac{\alpha+2}{2\alpha}},\]
so that $k\ge\qty(\frac{4}{3})^{-\alpha} \qty(\frac{m}{\alpha})^{\frac{2\alpha}{\alpha+2}}$.
 Then \eqref{eq:ghz-bond-trivial-bound} implies for any $\alpha\geq 1$ that
 \[
   \MAMU{2\gamma m} \ge \GHZ{ \qty(\frac{4}{3})^{-\alpha} \qty(\frac{m}{\alpha})^{\frac{2\alpha}{\alpha+2}} }.
 \]
 Ideally, we would like to take the maximum over $\alpha$ to obtain the best lower bound. Instead, we decide here to maximize the easier function \[ -\alpha \log(4/3) + \frac{2\alpha}{\alpha+2}\log m,\] neglecting the additional summand depending on $-\log\alpha$, since this will already be sufficient to get the desired scaling. The maximum is attained at $\alpha$ satisfying
 \[
   -\log(4/3) + 4 \frac{\log m}{(\alpha+2)^2} = 0 \iff \alpha = \alpha^{**} = 2\frac{\log^{1/2}m}{\log^{1/2}(4/3)} -2,
 \]
 again since the function is smaller or equal to zero for $\alpha$ equal to zero or tending to infinity.
 Since both $(4/3)^{-\alpha}$ and $(m/\alpha)^{\frac{2\alpha}{\alpha+2}}$ are decreasing in
 $\alpha$, substituting $\alpha = \lfloor \alpha^{**} \rfloor$, we obtain that
 \[
   \qty(\frac{4}{3})^{-\alpha} \le \qty(\frac{4}{3})^2 (m^2)^{-\frac{\log^{1/2}(4/3)}{\log^{1/2}(m)}},
   \]
and
 \[
   \qty(\frac{m}{\alpha})^{\frac{2\alpha}{\alpha+2}} \le (m^2)^{
     \qty(1-\frac{\log^{1/2}(4/3)}{\log^{1/2}(m)})
     \qty[ 1 + \frac{1}{\log m}
     \qty(\log(2) - \frac{1}{2} \log\log(4/3) + \frac{1}{2}\log\log m) ]}
 \]
 which implies
\begin{equation}
   \MAMU{2\gamma m} \ge \GHZ{q} \qq{with} q = { c\, (m^2)^{1 - 2\frac{\log^{1/2}(4/3)}{\log^{1/2}m}} },
 \end{equation}
 for some positive constant $c$ which gives the scaling claimed in \eqref{eq:ghz-bond-2}.
\end{proof}

\subsection{From \texorpdfstring{$\MAMU[m]{k}$}{MAMU[m](k)} to \texorpdfstring{$\GHZ[m]{k}$}{GHZ[m](k)}: the general case}\label{sec:MamuToGHZGen}

In this section, we will outline a method to obtain explicit degenerations from $\MAMU[m]{k}$ to $\GHZ[m]{k}$, generalizing some of the results for the $3$-party case from the previous section to
$m$ parties. We will then work out in detail the case for $m=4$ as an example. We leave the reverse bound as an open problem.

As before, $\MAMU[m]{k}$ will denote a network of $m$ parties arranged on a circle each sharing a maximally entangled state with $k$ levels with each of its two nearest neighbours. The goal is to find a local linear transformation $A_l(\varepsilon)$ at each vertex $l$ depending polynomially on $\varepsilon$ such that the leading contribution in $\varepsilon$ of the
resulting state is an $m$-party GHZ state with $k'$ levels
\[
  \qty(\bigotimes_{l=1}^m A_l(\epsilon) ) \sum_{i_1,\dots,i_m=1}^k \ket{i_1
    i_2}\ket{i_2i_3}\dots\ket{i_m i_1} = \epsilon^d\, \GHZ[m]{k'} + \order{\epsilon^{d+1}}\,
\]
where $k'$ should be as large as possible and the kets indicate the grouping of parties. Following \cite{Vrana2016}, we choose the operators $A_l(\epsilon)$ diagonal in the local
product basis, i.e. $A_l(\varepsilon)\ket{i,j} = \varepsilon^{p_l(i,j)}\ket{i,j}$. In addition,
we require that the leading order contribution in $\varepsilon$ is given by those vectors $\ket{i_1,i_2}\cdots\ket{i_m,i_1}$, that satisfy
a certain system of linear equations, \ie $\sum_l c_l i_l = g$ with coefficients vector $c_l$ and
inhomogeneity $g$ belonging to $\Z^\nu$ for some integer $\nu$ \cite{Vrana2016}. This last condition is equivalent to the requirement that the vector $\sum_l c_l i_l - g$ is the zero vector, which can be reexpressed by the norm condition
\begin{align}\label{eq:quadrCond}
0=\braket{\sum_l c_l i_l - g}=\sum_{l=0}^{m}\left(\braket{c_l}{c_l} i_l^2 - (\braket{g}{c_l}+\braket{c_l}{g}) i_l\right) + \braket{g}{g} + \sum_{l\neq l'} \braket{c_l }{c_{l'}} i_l i_{l'}\,.
\end{align}
However, we have to connect this expression back to the local operations $A_l(\varepsilon)$. Indeed, we have to ensure that \eqref{eq:quadrCond} can be generated by a product of local degenerations of the form
\[
  A_l(\epsilon)\ket{i j} = \epsilon^{p_l(i,j)} \ket{i j}\;,
\]
namely $\sum_{l} p_l(i_l,i_{l+1}) = d + \norm{\sum_l c_l i_l - g}_2^2$, which can always be achieved
for all the terms in \eqref{eq:quadrCond} that depend at most on a single index $l$. However, for
the cross-terms this requires $\braket{c_l}{c_{l'}}=0$ if $\abs{l-l'}>1$, forcing the vectors $c_l$
into an orthogonal representation of the cycle graph (giving a lower bound on $\nu$), in which case
we obtain
\begin{align}
\qty(\bigotimes_{l=1}^m A_l(\epsilon) ) \sum_{i_1,\dots,i_m=1}^k \ket{i_1
  i_2}\ket{i_2i_3}\dots\ket{i_m i_1} =
\sum_{i_1,\dots,i_m=1}^k \epsilon^{ \braket{\sum_l c_l i_l - g}}  \ket{i_1i_2}\ket{i_2i_3}\dots\ket{i_m i_1}\;.
\end{align}
Furthermore, we have to ensure that the leading contribution, given by
\begin{align}\label{eq:leadcontr}
  \sum_{\substack{i_1,\dots i_m=1\\ \sum_l c_l i_l = g}}^m \ket{i_1,i_2}\cdots\ket{i_m,i_1}
\end{align}
is indeed locally unitarily equivalent to a  GHZ state, \ie consists of an equal weight superposition
of product states $\psi_r = \psi_{r,1}\otimes \cdots\otimes{\psi_{r,m}}$, such that
$\braket{\psi_{r,l}}{\psi_{r',l}}=\delta_{r,r'}$. Since \eqref{eq:leadcontr} is a superposition of
vectors of the form $\ket{i_1,i_2}\cdots\ket{i_m,i_1}$ this means that fixing a pair of indices
$i_{l'}, i_{l'+1}$ at any vertex $l$ the linear equation $\sum_l c_l i_l = g$ must have at most one
unique solution in the remaining $i_l$. One way of ensuring this is to choose the vectors $c_l$, $c_{l'}$ linearly independent, whenever $\abs{l-l'}>1$.  In other words, we have to choose the vectors $(c_l)_l$ in such a way that if we remove any subset of vectors that share a vertex, the remaining ones have to be linearly independent. The maximal dimension of the GHZ state we can extract is then given by the number of integer solutions to the equation
\begin{align}\label{eq:lineq}
   \sum_{l=0}^m c_l i_l = g\,,
\end{align}
where we optimize over the inhomogeneity $g$. One can get a bound on the number of these solutions
by a probabilistic argument with respect to the inhomogeneity $g.$ However, in order to talk about
the finite $m$ case, we are going write down an explicit expression for \eqref{eq:lineq} that
satisfies all the necessary properties, \ie $\braket{c_l}{c_{l'}}=0$ for $l'\notin\{l-1,l,l+1\}$ and
$\{c_l\}_{l=0}^m\setminus\{c_j,c_{j+1}\}$ linearly independent for all $j$. We define the equations inductively starting from the four-party case
\begin{align}
  \begin{pmatrix}
    1\\1
  \end{pmatrix} i_1 +
  \begin{pmatrix}
    -1\\0
  \end{pmatrix} i_2 +
  \begin{pmatrix}
    1\\-1
  \end{pmatrix} i_3 +
  \begin{pmatrix}
    0\\1
  \end{pmatrix} i_4  = g
\end{align}
Now adding a new vertex and edge into the cycle between $i_4$ and $i_1$ means that now $c_4$ has to be orthogonal to $c_1$ and the new $c_5$ should be orthogonal to all vectors except $c_1$ and $c_4$. This can be achieved by the choice

\begin{align}
  \begin{pmatrix}
    1\\1\\1
  \end{pmatrix} i_1 +
  \begin{pmatrix}
    -1\\0\\0
  \end{pmatrix} i_2 +
  \begin{pmatrix}
    1\\-1\\0
  \end{pmatrix} i_3 +
  \begin{pmatrix}
    0\\1\\-1
  \end{pmatrix} i_4 +
  \begin{pmatrix}
    0\\0\\1
  \end{pmatrix} i_5 = g\;.
\end{align}
This procedure can be repeated leading to the following linear system for the $k$-cycle
\begin{align}
  \begin{pmatrix}
    1 &-1 &1 &0 & 0 & \cdots &0& 0\\
    1 &0 & -1 & 1 & 0 &        &0 &0\\
    1 &0 & 0& -1& 1 & & 0 \\
    1 & 0 &0 &0 &-1\\
    \vdots &\vdots &\vdots &\vdots &\vdots&&\vdots&\vdots\\
    1 & 0& 0 & 0&0& & 1&0\\
    1 & 0& 0 & 0&0& & -1&1
  \end{pmatrix}\cdot \vec{i} = g\;.
\end{align}
In order to find the integer solutions to this problem, we employ the Smith normal form of the matrix on the left-hand side, which gives the general solution vector
\begin{align}\label{eq:iiindCyc}
\vec i=
  \begin{pmatrix}
    z_1+z_2  + A_1\\
    (m-2) (z_1-z_2) + z_2 + A_2 \\
    (m-3)( z_1 - z_2) + z_2 \\
     \vdots\\
    2 (z_1-z_2)+z_2 + A_{m-2}\\
    z_1\\
    z_2
  \end{pmatrix}\,,
\end{align}
where $z_1,z_2$ are arbitrary integers and the constants $(A_l)$ depend on the choice of $g$ by a simple linear integer transformation given by the Smith normal form. In order to obtain the relevant solutions for our specific problem, we have to impose the upper and lower bounds $0$ and $k-1$ if the original maximally entangled states are of dimension $k$ for each entry of the solution vector $\vec i$.

\paragraph{The case $m=4$}
In the case $m=4$, \eqref{eq:iiindCyc} leads to the inequalities

\begin{align*}
0\leq \begin{pmatrix}
    z_1+z_2  + g_2\\
   2z_1-z_2 + g_2-g_1 \\
    z_1\\
    z_2
  \end{pmatrix}\leq n-1 \;.
\end{align*}
Choosing $g_2=g_1\in\{\frac{k}{2},\frac{k-1}{2}\}$ depending on whether $k$ is even or odd leads to the lower bound
 on the number of solutions of the form $\frac{k^2+1}{2}$ for odd dimensions and $\frac{k^2}{2}$ for even $k$.

 This shows $\MAMU[4]{k} \degengeq \GHZ[4]{\ceil{\frac{k^2}{2}}}$, \ie that we can locally degenerate from a cycle of four maximally entangled states with $k$ levels to a four party GHZ state of $\ceil{\frac{k^2}{2}}$ levels.  Hence on the level of plaquette states, we can degenerate from pairwise maximally entangled states  on four parties with $\ceil{\sqrt{2 D}}$ levels to a GHZ state on four parties of $D$ levels. Taking into account that in a two-dimensional square lattice the bond dimension of neighbouring plaquette states have to be combined (see Figure~\ref{fig:entStruExI} (c) and (e)), this means that semi-injective PEPS on the two-dimensional square lattice based on GHZ states as introduced in \cite{molnar2017generalization} with bond dimension $D$ can be represented as a normal PEPS of bond dimension $2D$. By our theorem, expectation values for these generalized PEPS can hence be computed from expectation values of normal PEPS, for which highly optimized numerical codes exist.

 \subsection{Bond dimension of \texorpdfstring{$\lambda$}{lambda} is strictly larger than 2}\label{sec:LambdNoBd2}
 As discussed in Section~\ref{sec:entanglement-structures}, in \cite{Schuch2012} the PEPS representation of the RVB-state is obtained via the multipartite entangled state
 \begin{align}
   \lambda = \sum_{i,j,k=0}^2 \varepsilon_{i,j,k} \ket{i,j,k} + \ket{2,2,2} = \lambdapicture[1.1]\;,
 \end{align}
 with $\varepsilon$ denoting the completely antisymmetric tensor such that $\epsilon_{0,1,2}=1$.
 In \cite{Schuch2012}, the state $\lambda$ was obtained as a restriction from
$\MAMUpicture[0.8]{3}{3}{3}$, obtaining the same PEPS representation of the RVB state with bond dimension
3 from \cite{verstraete2006criticality}. It turns out that is sub-optimal: the tensor $\lambda$ can be obtained also as a restriction from $\MAMUpicture[0.8]{3}{2}{2}$, using the following MPS representation:
\begin{align}\label{eq:RVBopRestr}
  M^{[1]}_0 &=\half\mqty( 0 & 1 & 0 \\ 1 & 0 & 0 ) &
  M^{[1]}_1 &=\mqty( 0 & -1 & 0 \\ 1 & 0 & 0 ) &
  M^{[1]}_2 &=\mqty( 1 & 0 & 1 \\ 0 & -1 & 0 ) \\
  M^{[2]}_0 &=\half\mqty( 0 & 1  \\ 1 & 0 \\ 0 & 0 ) &
  M^{[2]}_1 &=\mqty( 0 & -1  \\ 1 & 0\\ 0 & 0 ) &
  M^{[2]}_2 &=\mqty( 1 & 0 \\ 0 & -1  \\ 1 & 0 ) \\
  M^{[3]}_0 &=\half\mqty( 0 & 1  \\ 1 & 0  ) &
  M^{[3]}_1 &=\mqty( 0 & -1  \\ 1 & 0  ) &
  M^{[3]}_2 &=\mqty( 1 & 0 \\ 0 & -1 ).
\end{align}
This leads to a PEPS representation of the RVB state where the bond dimension is reduced from 3 to 2
on two of the edges of each triangle of the kagome lattice.

 We now prove that this representation is optimal, \ie that $\lambda$ cannot represented as an MPS of bond dimension $2$. This shows in particular a separation of bond and border bond PEPS representations on the kagome lattice, as the PEPS representation for $\lambda$ obtained in \eqref{eq:RVBdegen} has border bond dimension equal to ($2$,$2$,$2$).
\begin{proposition}
   $\MAMUpicture[1.5]{2}{2}{2} \not \geq \lambdapicture[1.6]$
\end{proposition}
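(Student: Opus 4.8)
The plan is to unfold the restriction into matrix–product form and then exploit the rigidity of $2\times 2$ matrices. Writing the local maps at the three vertices as triples of $2\times 2$ matrices $(A_i),(B_j),(C_k)$ with $i,j,k\in\{0,1,2\}$, the relation $\MAMU[3]{2}\ge\lambda$ is (by the correspondence between $\MAMU[3]{\cdot}$ and trace words) equivalent to $\lambda_{ijk}=\tr(A_iB_jC_k)$, i.e.\ to the trilinear identity $\tr(A(x)B(y)C(z))=\det[x,y,z]+x_2y_2z_2$ for the pencils $A(x)=\sum_i x_iA_i$, $B(y)=\sum_j y_jB_j$, $C(z)=\sum_k z_kC_k$. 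So the task is to show that this particular cubic, a Levi--Civita (determinant) part plus a single diagonal monomial, cannot be produced by $2\times 2$ pencils.

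First I would pin down the matrices carrying the index $2$. Each coordinate slice $\lambda_{2jk}$, $\lambda_{i2k}$, $\lambda_{ij2}$ is a rank-$3$ matrix; for instance $\lambda_{\cdot\cdot2}=\mqty(0&1&0\\-1&0&0\\0&0&1)$. Since $\lambda_{2jk}=\tr(A_2B_jC_k)$ is the Gram-type matrix of the bilinear form $(B,C)\mapsto\tr(A_2BC)$ on $M_2\times M_2$, and this form has rank $\le 2$ whenever $A_2$ is singular, rank $3$ forces $A_2$ (and cyclically $B_2$, $C_2$) to be invertible. Using the gauge freedom $A_i\mapsto XA_iY^{-1}$, $B_j\mapsto YB_jZ^{-1}$, $C_k\mapsto ZC_kX^{-1}$, which preserves every $\tr(A_iB_jC_k)$, I would normalize $A_2=C_2=I$; this leaves a residual simultaneous conjugation putting $B_2$ in Jordan form, and $\tr B_2=\lambda_{222}=1$.

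With this normalization the defining equations collapse into trace conditions on the remaining six matrices. The slices with two $2$'s give $\tr B_0=\tr B_1=0$ and $\tr(A_iB_2)=\tr(B_2C_k)=0$; the slices with one $2$ give $\tr(A_iB_j)$, $\tr(B_jC_k)$, $\tr(A_iB_2C_k)$ equal to the corresponding $\epsilon$-entries; and the eight all-$\{0,1\}$ equations read $\tr(A_iB_jC_k)=0$. I would then pass to the $\mathfrak{sl}_2$ picture: decompose each matrix into its trace part plus a traceless part $\vec a_i\cdot\vec\sigma$ and use $\tr(\sigma_p\sigma_q)=2\delta_{pq}$ and $\tr(\sigma_p\sigma_q\sigma_r)=2i\epsilon_{pqr}$. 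Every equation becomes a scalar relation among the dot products $\vec a_i\cdot\vec b_j$, $\vec b_j\cdot\vec c_k$ and the determinants $\det[\vec a_i,\vec b_j,\vec c_k]$ of vectors in $\C^3$. One finds that the antisymmetric pairings are forced, namely $\vec a_0\cdot\vec b_1=\vec b_0\cdot\vec c_1=\tfrac12$, $\vec a_1\cdot\vec b_0=\vec b_1\cdot\vec c_0=-\tfrac12$ and the diagonal pairings vanish, so in particular $B_0,B_1$ are linearly independent; and the eight vanishing conditions take the uniform shape
\[
\alpha_i(\vec b_j\cdot\vec c_k)+\gamma_k(\vec a_i\cdot\vec b_j)+2i\det[\vec a_i,\vec b_j,\vec c_k]=0,\qquad i,j,k\in\{0,1\},
\]
where $\alpha_i=\tr A_i$, $\gamma_k=\tr C_k$ are themselves tied to $B_2$ through $\tr(A_iB_2)=\tr(B_2C_k)=0$.

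The main obstacle is the final step: showing this over-determined scalar/vector system is inconsistent. I expect to close it by exploiting the $\Z_3$ cyclic symmetry of $\lambda$ (which cyclically permutes the three triples, hence the relations above) to cut down the number of independent cases, and then by playing the determinant identities against the rigidly fixed antisymmetric pairings. Heuristically, the determinant part $\det[x,y,z]$ demands that the relevant triples of vectors be in generic position, while the monomial $x_2y_2z_2$ together with the invertibility of $A_2,B_2,C_2$ and the prescribed degenerate Gram data pin the six vectors down too tightly to meet the nonzero determinant values dictated by the $\epsilon$-entries. I anticipate the contradiction to be quantitative bookkeeping on these finitely many equations rather than a conceptual difficulty, the subtlety being only to organize the elimination so that it terminates cleanly rather than branching into many subcases.
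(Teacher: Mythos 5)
Your reduction is set up correctly, and the opening is in fact tidier than the paper's in one respect: the observation that the slices $\lambda_{2\cdot\cdot}$, $\lambda_{\cdot 2\cdot}$, $\lambda_{\cdot\cdot 2}$ have rank $3$ and therefore force $A_2,B_2,C_2$ to be invertible, the gauge normalization $A_2=C_2=\Id$, and the Pauli-basis rewriting of all $27$ trace equations are all valid (the pairings $\vec a_0\cdot\vec b_1=\vec b_0\cdot\vec c_1=\tfrac12$, $\vec a_1\cdot\vec b_0=\vec b_1\cdot\vec c_0=-\tfrac12$ and the displayed form of the eight $\{0,1\}^3$ equations check out). This is a genuinely different normalization from the paper's, which instead puts the pencil $(B_0,B_1)$, viewed as a $2{\times}2{\times}2$ tensor, into one of the seven $GL(2)^{\times 3}$ normal forms of three-qubit states and argues matrix-by-matrix.

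The problem is that your proof stops exactly where the proposition begins. Everything you write down is a faithful but essentially mechanical translation of the hypothesis into a finite polynomial system; the assertion that this system is inconsistent \emph{is} the content of the statement, and you do not derive it --- you only ``expect'' and ``anticipate'' a contradiction from ``quantitative bookkeeping''. There is no argument that the elimination closes, and the difficulty should not be underestimated: in the paper this last stage is where all the work sits, requiring a dedicated lemma on orthogonal representations of the graph $\grph[K]^0_{n,n}$ (used to force $b_{0,0}=b_{1,1}=0$), followed by a four-way case analysis on $B_1$, the last case of which needs a nontrivial chain of proportionality identities to reach $\tr(A_2B_2C_2)=0\neq 1$. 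Your proposed shortcut via the $\Z_3$ cyclic symmetry of $\lambda$ is also doubtful as stated: once you gauge $A_2=C_2=\Id$ the three legs are no longer on an equal footing, so the cyclic symmetry does not act on your normalized unknowns without re-gauging, and it is not clear that it eliminates any cases. As it stands this is a plausible plan with a correct setup, not a proof.
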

\begin{proof}
  Given the general form of an MPS, we have to show that there exists no triples of $2{\times}2$-matrices $(A_i)$, $(B_j)$, $(C_k)$ satisfying
  \begin{align}\label{eq:trLambda}
    \tr(A_i B_j C_k) = \varepsilon_{i,j,k} + \delta_{2,i,j,k}\;.
  \end{align}
  We first note, that the trace on the left hand side gives rise to the usual MPS gauge freedom,
  were we can substitute $A_i\mapsto X A_i Y$, $B_j\mapsto Y^{-1} B_j Z$ and
  $C_k\mapsto Z^{-1} C_k X^{-1}$ for $X,Y,Z\in GL(3)$.  Next, we observe that the antisymmetric
  part of $\lambda$ is invariant under $M\otimes M \otimes M$ with $M\in SL(3)$, the special linear
  group. Hence, restricting to matrices of the form $M=R\oplus\ketbra{2}$, with $R\in SL(2)$, which
  in addition leave $\ket{2,2,2}$ invariant, we also have $(M\otimes M\otimes M)\lambda = \lambda$.
  Thus taking this physical symmetry plus the $Y, Z$ gauge transformation together and restricting for the moment to the
  $2{\times}2{\times}2$ tensor $\widetilde{B}=(B_0, B_1)$, we see that we can apply any operator
  $K_1\otimes K_2 \otimes K_3$ with $K_i\in GL(2)$ to $\widetilde{B}$ without changing
  \eqref{eq:trLambda} if we transform $(A_i)_i$ and $(C_k)_k$ accordingly.
  However $GL(2)^3$ orbits of $2{\times}2{\times}2$-tensors are known explicitly
  \cite{dur2000three}, and we can use this freedom in order to reduce $B_0$ and $B_1$ to seven
  different normal forms, for which we have to obtain a contradiction. In addition to the null tensor and the product
  state, these seven classes encompass the bipartite entanglement between only two parties, the
  W state and the GHZ state. We will now go through all the cases.
  \begin{description}
  \item[null tensor]
       In this case, both $B_0$ and $B_1$ are equal to the zero matrix, which leads for example  to  $\tr(A_i B_1 C_k)=0$, which clearly contradicts \eqref{eq:trLambda}.
  \item[product state]
    In this case, $\tilde B$ can be chosen as $\ket{0}\ket{0}\ket{0}$, which implies $B_0 = \ketbra{0}{0}$ and $B_1$ equal to the zero matrix. Hence, $\tr(A_iB_1C_k) =0$ for all $i,k$ leads to the same contradiction as for the null tensor.
  \item[bipartite entanglement (3 cases)] Depending on the two tensor factors that share the maximally entangled state, $\widetilde{B}$ can be chosen as $\ket{000}+\ket{011}$, $\ket{000}+\ket{101}$ or $\ket{000}+\ket{110}$.
      In the first case $B_0=\Id$ and $B_1=0$, which brings us back to the previous situation. In the remaining two cases $B_0=\ketbra{0}{0}$ and $B_1=\ketbra{0}{1}$ or  $B_1=\ketbra{1}{0}$, respectively.
    \item[GHZ state] In this case, $\widetilde{B}=\ket{000}+\ket{111}$ leading to $B_0=\dyad{0}{0}$ and $B_1 = \dyad{1}{1}$.
  \item[W state] Finally, in this case $\widetilde{B}$ can be chosen as $\ket{000}+\ket{101}+\ket{110}$, giving $B_0 = \dyad{0}{0}$ and $B_1=\ketbra{0}{1}+\ketbra{0}{1}$.
  \end{description}
  In all the cases which we have not immediately discarded, we see that $B_0$ can be chosen as $\dyad{0}{0}$ while
  $B_1$ can either be $\dyad{1}{1}$, $\dyad{1}{0}$, $\dyad{0}{1}$ or $\dyad{0}{1}+\dyad{1}{0}$. We now want to show that neither of
  these cases are possible. We start by decomposing the matrices $A_i$ and $C_k$ as
  \[
    A_i = \dyad{a_i}{0} + \dyad{\tilde a_i}{1}, \quad C_k = \dyad{0}{c_k} + \dyad{1}{\tilde c_k}
  \]
  for vectors $\ket{a_i}, \ket{\tilde a_i}, \ket{c_k}, \ket{\tilde c_k} \in \C^2$.
  Since we have reduced the problem to the case $B_0=\dyad{0}{0}$, we have that
  \[ \tr(A_iB_0C_k) = \braket{c_k}{a_i} = \epsilon_{i,0,k}.\]
  In particular, we have that $\braket{c_1}{a_2} = 1$, $\braket{c_2}{a_1} = -1$, implying that none of these vectors can be the zero vector. Together with $\braket{c_2}{a_2} = 0$ this means that $\linspan\{\ket{a_1}, \ket{a_2}\} = \C^2$, and thus necessarily $\ket{c_0}$ has to be 0, since the trace condition forces it to be orthogonal to both $a_1$ and $a_2$.
  Similarly, we have that $\linspan\{\ket{c_1}\ket{c_2}\} = \C^2$ and that
  $\ket{a_0} = 0$.

  Let us denote the matrix entries of $B_2$ as $b_{i,j}=\tr(B_2 \dyad{j}{i})$ for $i,j=0,1$, and let
  us consider the vectors
  \[
    \ket{a_i'} = b_{0,1}\ket{a_i} + b_{1,1}\ket{\tilde a_i}, \quad \ket{c_k'} =
    \overline{b_{1,0}}\ket{c_k} + \overline{b_{1,1}} \ket{\tilde c_k}.
  \]
  Then it holds that
  \[
    \braket{c_k'}{a_i'} = b_{1,1} \tr(A_iB_2C_k) + (b_{1,0}b_{0,1} - b_{0,0}b_{1,1}) \braket{c_k}{a_i} =
    b_{1,1}(\epsilon_{i,2,k} + \delta_{2,i,k}) - \det(B_2) \epsilon_{i,0,k}.
  \]
  In particular $\braket{c_k'}{a_i'}=0$ for $(i,k) = \{ (0,0), (0,2), (2,0)\}$. Therefore, they
  define an orthogonal representation of $\grph[K]^0_{2,2}$: by \Cref{lemma:graph-repr-4}, either
  $\braket{c_2'}{a_2'}=0$, or either $\ket{a_0'}$ or $\ket{c_0'}$ is zero. We can exclude the latter
  case, since this would imply that either $A_0$ or $C_0$ is zero, which we already know leads to a
  contradiction. Therefore $\braket{c_2'}{a_2'} = b_{1,1} = 0$.
  In the same way, defining
  \[
    \ket{a_i''} = b_{0,1}\ket{a_i} + b_{0,0}\ket{\tilde a_i}, \quad \ket{c_k''} =
    \overline{b_{1,0}}\ket{c_k} + \overline{b_{0,0}} \ket{\tilde c_k},
  \]
  it holds that
  \[
    \braket{c_k''}{a_i''} =  b_{0,0}(\epsilon_{i,2,k} + \delta_{2,i,k}) - \det(B_2) \epsilon_{i,0,k},
  \]
  so we can conclude that also $b_{0,0}=0$.

  We will now consider the four possibilities we have for $B_1$, driving each one of them to a
  contradiction, and therefore showing that no MPS representation of $\lambda$ with bond dimension 2
  is possible.

  \begin{enumerate}
    \item $B_1 = \dyad{1}{0}$: We get a contradiction since $\tr(A_2 B_1 C_0)$ should be $-1$, but
      $B_1 C_0=0$.
     \item $B_1 = \dyad{0}{1}$: We get a contradiction since $\tr(A_0 B_1 C_2)$ should be $1$, but
       $A_0 B_1 =0$.
     \item $B_1=\dyad{0}{1}+\dyad{1}{0}$:
       In this case,
       $\tr(A_iB_1C_k) = \braket{\tilde c_k}{a_i} + \braket{c_k}{\tilde a_i} = \epsilon_{i,1,k}$,
       and in particular $\tr(A_1B_1C_0) = \braket{\tilde c_0}{a_1}$ since $\ket{c_0}=0$.
       From this equation it follows that
       \[
         \tr(A_1B_2C_0) = b_{0,1}\braket{\tilde c_0}{a_1} + b_{1,1}\braket{c_0}{\tilde a_1}
         = b_{0,1} \tr(A_1B_1C_0)  = 0 \neq 1, \]
       so we obtain a contradiction.
     \item $B_1=\ketbra{1}{1}$: We see that $\tr(A_iB_1C_k) = \braket{\tilde c_k}{\tilde a_i} =
      \epsilon_{i,1,k}$, so reasoning in the same way as before we see that $\ket{\tilde a_1} =
      \ket{\tilde c_1} = 0$ and that $\ket{\tilde a_0}$, $\ket{\tilde a_2}$, $\ket{\tilde c_0}$ and $
      \ket{\tilde c_2}$ are non-zero, therefore reducing to the case where
  \begin{align*}
    A_0 &= \dyad{\tilde a_0}{1}, & C_0 &= \dyad{1}{\tilde c_0}, \\
    A_1 &= \dyad{a_1}{0}, & C_1 &=  \dyad{0}{c_1}, \\
    A_2 &= \dyad{a_2}{0} + \dyad{\tilde a_2}{1}, & C_2 &=  \dyad{0}{c_2} + \dyad{1}{\tilde c_2}.
  \end{align*}
  Considering
  \[
    \tr(A_1B_2C_0) = \braket{\tilde c_0}{a_1} b_{0,1} = 1,\quad
    \tr(A_0B_2C_1) = \braket{c_1}{\tilde a_0} b_{1,0} = -1,
  \]
  we obtain that $b_{0,1}, b_{1,0}$ and $\braket{\tilde c_0}{a_1}$, $\braket{c_1}{\tilde a_0}$ are non-zero. On the other hand since $b_{1,1}= 0$ we have that
  \[
    0 = \tr(A_2B_2C_0) = \braket{\tilde c_0}{a_2}b_{0,1},\quad
    0 = \tr(A_0B_2C_2) = - \braket{c_2}{\tilde a_0}b_{1,0},\quad
  \]
   and since $b_{0,1}\neq 0$ and $b_{1,0}\neq 0$ we see that necessarily $\braket{\tilde c_0}{a_2} =
   \braket{c_2}{\tilde a_0}$ = 0.
   Therefore $\ket{a_2}$ is proportional to $\ket{\tilde a_0}$ and similarly $\ket{c_2}$ is
   proportional to $\ket{\tilde c_0}$, and so it follows that
   \begin{align*}
     \frac{\braket{\tilde c_2}{a_2}}{\braket{c_2}{\tilde a_2}} &=
      \frac{\braket{\tilde c_2}{a_2}}{\braket{c_2}{\tilde a_2}} \cdot
     \frac{\braket{c_1}{\tilde a_0}}{\braket{c_1}{\tilde a_0}} \cdot
     \frac{\braket{\tilde c_0}{a_1}}{\braket{\tilde c_0}{a_1}} \\ &=
     \frac{\braket{\tilde c_2}{\tilde a_0}}{\braket{c_2}{a_1}} \cdot
     \frac{\braket{c_1}{a_2}}{\braket{c_1}{\tilde a_0}} \cdot
     \frac{\braket{\tilde c_0}{a_1}}{\braket{\tilde c_0}{\tilde a_2}} = \frac{1}{-1}\cdot
                                                                    \frac{1}{\braket{c_1}{\tilde
                                                                    a_0}} \cdot \frac{\braket{\tilde
                                                                    c_0}{a_1}}{-1} = - \frac{b_{1,0}}{b_{0,1}}.
   \end{align*}
   This leads to a contradiction since
   \[
     \tr(A_2B_2C_2) =b_{0,1}\braket{\tilde c_2}{a_2} + b_{1,0}\braket{c_2}{\tilde a_2} = 0 \neq 1.
   \]
\end{enumerate}
  \end{proof}

\section{Conclusions}
\label{sec:conclusions}

We have shown that analyzing the geometry of entangled states and transformations between entanglement structures provides a  framework for the construction of more efficient tensor network representations. Starting from local improvements on the level of {plaquette} states, we obtain optimized tensor network representations on the entire lattice. We provide two methods to construct such local improvements: restrictions and degenerations.

Using geometrical tools, our main result allows us to lift the local approximate conversion originating from degenerations on the level of \hbox{plaquette} states to an exact representation of the tensor network state on the entire lattice, given as a superposition of tensor network states with smaller bond dimension, the number of which scales linearly in the system size. In addition, our general result gives a prescription of how to leverage this bond dimension reduction in order to reduce the computational cost of computing expectation values. More precisely, we describe a parallel contraction algorithm to compute physical expectation values $\expval{O}{T}$ of the original state as $\sum_i^{2 e F} \gamma_i \expval{O}{V_i}$, where each $V_i$ is given as PEPS of lower bond dimension than $T$.

As an example of application of these techniques, we studied explicitly the RVB state on the kagome lattice. We present two improvement on the representation from \cite{Schuch2012}. The first is obtained by considering bonds of different dimensions, allowing us to arrive at the optimal representation, where two out of three bonds on each triangle can be reduced to bond dimension 2 instead of 3. This leads to saving in the cost of computing contractions (which for the sake of completeness we detailed in \Cref{sec:comp-complexity}). The second improved representation is obtained by considering the more general case of degenerations from the plaquette state \lambdapicture[0.8]: we can then find a border bond dimension $2$ representation of the RVB state, which again is optimal in terms of this effective bond dimension.

More generally, given an entanglement structure $\Phi$ built from locally distributed multi-partite entangled states, our result allows to characterize the variational class given by the set of states obtained by applying local maps $\{A_i(\epsilon)\}_{i=0}^L$ which are
polynomial of degree $e$ in $\epsilon$, and then taking the limit $\epsilon$ to zero. Each state
obtained in this fashion is specified by a polynomial number of parameters. Our main theorem then shows that this gives us access to states which arise as a superposition of a linear number of states represented by $\Phi$, going beyond the states representable by a single tensor network state of this bond dimension.  Nevertheless, their expectation values can be efficiently computed by interpolation.

An interesting question that we leave open for future research is how to optimize efficiently within the set of tensor network states that arise as degenerations, e. g. in order to minimize the energy of a local Hamiltonian. Since degenerations are still given by a local tensor albeit depending polynomially on a free parameter, we expect that a local optimization step along the lines of the known tensor network techniques will be possible. However, one has to carefully take into account the additional constraint for obtaining an honest degeneration and we leave the details of such an optimization procedure for future work.

\clearpage
\section*{Acknowledgments}
We thank Ignacio Cirac, Christian Krumnow and David P\'erez-Garc\'ia for helpful
discussions. M.\,C.\ acknowledges the hospitality of the Center for Theoretical
Physics at MIT, where part of this work was carried out.
\paragraph{Funding information}
We acknowledge financial support from the European Research Council (ERC Grant Agreement
no.~337603 and ERC Grant agreement no.~818761), the Danish Council for Independent Research (Sapere Aude) and
VILLUM FONDEN via the QMATH Centre of Excellence (Grant no.~10059).
A.\,L.\ acknowledges support from the Walter Burke Institute for
Theoretical Physics in the form of the Sherman Fairchild Fellowship as well as support from the
Institute for Quantum Information and Matter (IQIM), an NSF Physics Frontiers Center (NFS Grant
PHY-1733907).
A.\,H.\,W.\ thanks the VILLUM FONDEN for its support with a Villum Young Investigator Grant (Grant No.
25452) and the Humboldt Foundation for its support with a Feodor Lynen
Fellowship.
P.\,V.\ acknowledges support by the National Research, Development and Innovation Fund of Hungary within the Quantum Technology National Excellence Program (Project Nr.~2017-1.2.1-NKP-2017-00001) and via the research grants K124152, KH129601.

\begin{appendix}
\section{Computational complexity of tensor-contractions}\label{sec:comp-complexity}

In this appendix, we derive estimates on the computational cost of exactly and approximately contracting PEPS
networks for the two-dimensional square lattice and the kagome lattice. We will subsequently discuss a specialized contraction strategy for the RVB state from the literature.

\subsection{Exact contraction of the RVB state on the kagome lattice}
 Before discussing contractions of tensor networks with varying bond dimension in the subsequent subsection, let us briefly comment on the computational complexity of exactly contracting the RVB state on the kagome lattice in regards to bond and border bond PEPS representations.
 One strategy employed to contract a PEPS on a lattice is to treat one boundary of the two PEPS layers as an MPS and to view the contraction of the remaining rows of PEPS tensors as the application of matrix product operators to this boundary MPS (see Figure~\ref{fig:contrReg2DLat} and \ref{fig:contrKagOv}). For a PEPS with bond dimension $D$ on the kagome lattice the computational cost of the contraction of a single local tensor into the boundary MPS is given by $\mathcal{O}(\chi^3 D^4) + \mathcal{O}(\chi^2 D^6 d)$ \cite{JordanThesis}, with $d$ the physical dimension and $\chi$ denoting the bond dimension of the boundary MPS. However, since we do not reduce the bond dimension of the boundary MPS $\chi$, we can omit the final SVD, and the relevant scaling is simply $\mathcal{O}(\chi^2 D^6 d)$.

Contracting one full row of local PEPS tensors into the boundary MPS increases its bond dimension by a factor of $D^2$ due to the double layer structure of the network, i.e. $\chi_{i+1} = D^2 \chi_i$. In the case of an exact contraction, this bond dimension is not compressed after each step, and hence $\chi$ grows exponentially with $D^2$. Accordingly, if we consider the computational cost of computing an expectation value of a PEPS with bond dimension $D$ on a $2(L+1)\times 2(L+1)$ lattice and taking into account that we can use a boundary MPS from both sides of the lattice, the overall computational cost is given by
\begin{align}
  2(L+1) \sum_{l=1}^{L}  \mathcal{O}(\chi_l^2 D^6 d) &= \sum_{l=1}^{L}  \mathcal{O}(\chi_0^2 (D^{2l})^2 D^6 d)  = 2(L+1)\mathcal{O}(\frac{D^{4L}-1}{D^4-1} D^{10} d)\;,
\end{align}
with $\chi_0$ the bond dimension of the first boundary MPS. In case of a border PEPS representation with border bond dimension $D$ and error degree $eL$, i.e. linear in the system size as happens for the case of local degenerations of the plaquette tensor, Theorem~\ref{thm:main} implies that we have to contract $2eL+1$ PEPS in order to compute an exact expectation value, leading to a scaling of
\begin{align}
2(L+1)(2eL+1)\mathcal{O}(\frac{D^{4L}-1}{D^4-1}D^{10}  d)\;.
\end{align}
Accordingly, considering the bond dimension $3$ PEPS representation of the RVB state in comparison to the border bond dimension $2$ representation, we obtain a leading scaling of $\mathcal{O}(L\,3^{4L})$ versus a scaling of $\mathcal{O}(L^2 2^{4L})$, which gives an exponential improvement.
\begin{figure}[t]
\centering
  \includegraphics[width=0.8\textwidth]{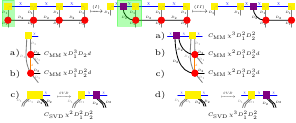}
  \caption{Approximate contraction of a PEPS network on the two-dimensional square lattice with the boundary-MPS method.
  The first row shows the initial step at the boundary (I) and the bulk-step (II), which is repeated until the right boundary of the network is reached. For simplicity, only a single layer of the two-layer PEPS-network is shown here, but each red circle in the upper row represents the two local PEPS tensors that have to be contracted along the invisible physical dimension.  In the second row the detailed contractions of both PEPS-layers that are carried out in each step are depicted with their corresponding computational cost.  Lines that terminate in a tensor at a given sub-step ( a)-d)) in a tensor represent the contractions carried out at this point, whereas lines not connected to a tensor at that level correspond to free indices.  In total, the scaling is given by  $ (C_{\text{MM}} + C_{\text{SVD}})\,\chi^3 D_1^2 D_2^2 + 2C_{\text{MM}}\,\chi^2  D_1^3 D_2^3 d$.}\label{fig:contrReg2DLat}
\end{figure}

\subsection{PEPS with varying bond dimension}
We now turn to the contraction of PEPS networks on the kagome and square lattice. In contrast to the results commonly stated in the literature, we will explicitly deal with the case of non-equal bond dimensions with respect to different virtual degrees of freedom and in the case of the kagome lattice also take into account different distributions of the legs in the two layers of the network.
In all cases, we consider a boundary-MPS approach, where the PEPS tensors at the boundary of the network are considered as an MPS of fixed bond dimension $\chi$ to which the internal PEPS tensor regarded as MPOs are applied subsequently. All bounds are based on the estimates $C_{MM} D_1 D_2 D_3$ for the computation of the product of two rectangular matrices of dimensions $D_1{\times} D_2$ and $D_2{\times} D_3$ and $C_{\text{SVD}}\, \chi D_1 D_2 $ for the truncated singular value decomposition (SVD) of a $D_1{\times}D_2$ matrix to its largest $\chi$ singular values \cite{halko2011finding} with $C_{\text{MM}}$ and $C_{\text{SVD}}$ constants.

\noindent\textbf{Two-dimensional square lattice}
Starting from one boundary of the lattice, the next row of the double of the contraction are depicted in \Cref{fig:contrReg2DLat}.  Starting from the left-boundary, the first MPO-tensor (red circle) of the next row is contracted into the boundary-MPS and its bond dimension subsequently reduced to $\chi$ via an SVD (step (I)). The cost of each step in this contraction is indicated in  the second row of \Cref{fig:contrReg2DLat}. In each of the steps a), b) and c), the contractions performed in that step are indicated by lines that terminate in a tensor at that level, all other lines count as free indices. In step I.a) for example the only contraction performed is with respect to the gray line connecting the yellow square and the red circle, whereas the remaining lines (two gray, one black, one orange) are free indices. Hence, this contraction can be seen as a multiplication between a $\chi D_1{\times}D_1$ matrix (yellow square) and a $D_1{\times}D_1 D_2 d$ matrix (red circle) leading to an overall cost of $C_{\text{MM}}\cdot\chi D_1^3 D_2 d$. The two red circles correspond to the two layers of the PEPS network. Hence, the overall cost for contracting the MPO into the boundary-MPS at the boundary is given by
\[
C_{\text{SVD}}\,\chi^2 D_1^2 D_2^2 + C_{\text{MM}}\,\chi D_1^3(D_2^2+D_2)d.
\]
In step (II), the sub-steps b) to d) are basically the same one as the steps a) to c) in step (I),
however we first have to take care of the violet tensor resulting from the SVD performed in sub-step
I.c. The overall computational cost is then given by
\[ (C_{\text{MM}} + C_{\text{SVD}})\,\chi^3 D_1^2 D_2^2 + 2C_{\text{MM}}\,\chi^2
  D_1^3 D_2^3 d\;.
  \]
Because this cost upper bounds the contraction cost at the boundary, cost of contracting each MPO tensor into the boundary-MPS tensor can be upper bounded by
\begin{equation}\label{eq:cost-2d-lat}
 (C_{\text{MM}} + C_{\text{SVD}})\,\chi^3 D_1^2 D_2^2 + 2C_{\text{MM}}\,\chi^2
  D_1^3 D_2^3 d\;.
\end{equation}
which agrees with the estimate for uniform bond
dimension $\mathcal{O}(\chi^2 D^6 d) + \mathcal{O}(\chi^3 D^4)$ found in the literature
\cite{verstraete2004renormalization,verstraete2008matrix,lubasch2014algorithms}.

\begin{figure}[t]
\centering
  \includegraphics[width=0.8\textwidth]{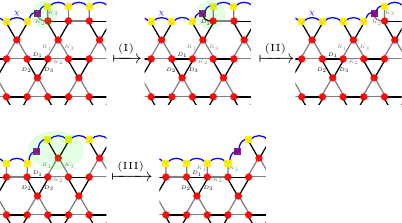}
  \caption{Approximate contraction of a PEPS network on the kagome lattice with the boundary-MPS method. Depending on the position of the local tensor to be contracted into the boundary-MPS in the kagome lattice, three different contractions have to be performed. We allow for different bond dimensions for up- ($K_i$) and downwards ($D_i$) pointing triangles.}\label{fig:contrKagOv}
 \end{figure}

\noindent\textbf{Kagome lattice} The situation for the kagome lattice is very similar when compared to the
square two-dimensional lattice except more care has to be taken about how to associate the
local tensors to the boundary-MPS tensors. The procedure we adopt here is depicted in
\Cref{fig:contrKagOv}. In order to make the procedure more transparent, we first split the boundary
vertices at the tip of each triangle into two lattice sites, before we start the contraction
procedure. Fixing the three bond dimensions in each triangle for the full lattice, we can nevertheless distinguish their distribution for upwards ($K_1$, $K_2$, $K_3$) and downwards ($D_1$, $D_2$, $D_3$) pointing triangles.
In comparison to the square two-dimensional lattice, we have to distinguish three different contraction steps, depending on whether we are contracting a tensor on the top right (I), the top left (II) or in the middle (III) of a hexagon.
These three steps are then repeated until the right boundary of the kagome lattice is reached.

Figure~\ref{fig:contrKagStepsI} (I) to (III) depicts the details of these three steps, breaking down every step into the explicit tensor-contractions performed and how expensive
they are in terms of the dimension of the indices of the involved local tensors. In order to
realize improved savings, we allow different distributions of the three bond dimensions in the two triangles for the upper and lower PEPS-layer, indicated by $D_i^\uparrow/D_i^\downarrow$ or $K_i^\uparrow/K_i^\downarrow$, respectively. Taking the maximum over the different computational costs in the three different contractions steps for $\chi^2$ and $\chi^3$ separately, we can upper bound the computational cost of each of all local contractions by
\begin{align*}
  C_{\text{SVD}}\,\chi^3\max(&D_1^\uparrow D_1^\downarrow D_2^\uparrow D_2^\downarrow,\,
   D_3^\uparrow D_3^\downarrow K_2^\uparrow K_2^\downarrow,\,
  K_1^\uparrow K_1^\downarrow K_3^\uparrow K_3^\downarrow)\\
  + C_{\text{MM}}\,\chi^3\max(&K_2^\uparrow K_2^\downarrow  K_3^\uparrow K_3^\downarrow,\,
  D_1^\uparrow D_1^\downarrow K_1^\uparrow K_1^\downarrow,\,
  D_2^\uparrow D_2^\downarrow  D_3^\uparrow D_3^\downarrow)\\
  + C_{\text{MM}}\,\chi^2 d \max(&K_2^\uparrow K_2^\downarrow K_3^\uparrow K_3^\downarrow D_1^\uparrow
    D_2^\uparrow  +  K_2^\downarrow K_3^\downarrow D_1^\uparrow D_1^\downarrow D_2^\uparrow D_2^\downarrow, \\
   &D_1^\uparrow D_1^\downarrow K_1^\uparrow K_1^\downarrow D_3^\uparrow K_2^\uparrow  + D_1^\downarrow K_1^\downarrow D_3^\uparrow D_3^\downarrow K_2^\uparrow  K_2^\downarrow,
  D_2^\uparrow D_2^\downarrow D_3^\uparrow D_3^\downarrow K_1^\uparrow K_3^\uparrow  +
   D_2^\downarrow D_3^\downarrow K_1^\uparrow K_1^\downarrow K_3^\uparrow  K_3^\downarrow).
\end{align*}
In the case, where $D_1=D_3\leq D_2$, choosing the same distribution of the bond dimensions in  both
layers and both types of triangles, i.e. $D_i^\uparrow=D_i^\downarrow$ and $K_i=D_i$ we obtain an
upper bound of
\begin{equation}\label{eq:cost-kagome-lat}
 (C_{\text{SVD}} + C_{\text{MM}})\, \chi^3 D_1^2 D_2^2 +  2 C_{\text{MM}}\, \chi^2 D_1^3 D_2^3 d,
\end{equation}
which has a similar scaling as the square two-dimensional lattice. In the case where all bond dimensions are equal, we arrive at a scaling $\mathcal{O}(\chi^3 D^4) + \mathcal{O}(\chi^2 D^6 d)$ in correspondence with previous results in the literature \cite{JordanThesis}.

\begin{figure}[t]
\centering
       \includegraphics[width=0.3\textwidth]{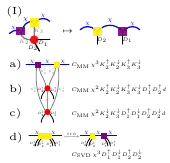}
       \includegraphics[width=0.3\textwidth]{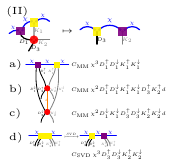}
       \includegraphics[width=0.3\textwidth]{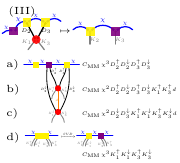}

  \caption{Illustration of the three contraction steps involved in the approximate contraction of a PEPS on the kagome lattice. (I) Details on step (I): The first row depicts the overall contraction step. (a)-(d) show the contractions performed in each sub-step and their
    corresponding computational cost. The superscript $\updownarrow$ indicates, whether a given
    index corresponds to the upper or lower level of the PEPS network. As in the case of the square two-dimensional lattice, lines terminating in a tensor for a given step are contracted, whereas non-terminated lines correspond to free indices of the tensors. (II) Details on step (II): The first row depicts the
     overall contraction step. (a)-(d) show the contractions performed in each sub-step and their
     corresponding computational cost. The superscript $\updownarrow$ indicates, whether a given
     index corresponds to the upper or lower level of the PEPS network. As in the case of the square
     two-dimensional lattice, lines terminating in a tensor for a given step are contracted, whereas non-terminated lines correspond to free indices of the tensors. (III) Details on step (III): The first row depicts the
     overall contraction step. (a)-(d) show the contractions performed in each sub-step and their
     corresponding computational cost. The superscript $\updownarrow$ indicates, whether a given
     index corresponds to the upper or lower level of the PEPS network. As in the case of the square
     two-dimensional lattice, lines terminating in a tensor for a given step are contracted, whereas non-terminated lines correspond to free indices of the tensors.}\label{fig:contrKagStepsI}
 \end{figure}

\subsection{Approximate contraction of the RVB state}

In this section, we apply the estimates on the computational costs of approximately contracting a PEPS on the kagome lattice derived so far in the context of the restrictions and degenerations for the RVB state. As discussed in Section~\ref{sec:LambdNoBd2} considering general restrictions and allowing for unequal bond dimension, we can obtain a PEPS representation of the RVB state, where
two out of three bonds on each triangle of the kagome lattice are reduced to bond dimension 2 instead of 3 (see
\eqref{eq:RVBopRestr}).

According to \eqref{eq:cost-kagome-lat}, the computational cost of contracting a PEPS with bond dimensions satisfy $D_1 = D_3\leq D_2$ around a
triangular plaquette scales as $C_1 \chi^3 D_1^2 D_2^2 + C_2 \chi^2 D_1^3 D_2^3 d$,  where
$C_i$ are constants and $\chi$ denotes the bond dimension of the boundary-MPS. Hence, the optimized tensor network representation of the entanglement structure generated by \lambdapicture[0.8] underlying the RVB state reduces
the prefactor of $\chi^3$ from  $81C_1$ to $36C_1$ and for $\chi^2 d$ from $729C_2$ to $216C_2$ for restrictions.  Note that this improvement applies to the contraction of all tensor networks based on the \lambdapicture[0.8] entanglement structure on the kagome lattice.
The same entanglement structure representing the RVB state is used in \cite{Schuch2012} to construct
a family of quantum states which interpolates between the RVB state and a dimer state, which are
believed to lie in different quantum phases. Since we have improved the PEPS representation of the
entanglement structure behind all these states, the saving we have obtained for the RVB state
applies to all of them.
Note further that, obviously, there are ways to optimize the contraction cost for specific tensor
networks. In \cite{Schuch2012}, for instance, the kagome lattice is first transformed to a square
lattice for which an RVB-specific improved double layer bond dimension is derived. For this particular example, our contraction scheme does not provide an advantage, however we would like to stress that our approach will allow a representation of border bond dimension $2$ for any state obtained as a restriction from the $\Lambda$ entanglement structure.

If we consider the more general case of approximating the plaquette state \lambdapicture[0.8] in terms of degenerations, we can use the border bond dimension $2$ representation of the RVB state (\eqref{eq:RVBdegen}). Employing the parallel contraction algorithm presented in Section~\ref{sec:exStatRep}, allows us to take advantage of this reduction also in the case of the RVB state or any other tensor network state based on the \lambdapicture[0.8] entanglement structure on the kagome lattice. The reduction to border bond dimension $2$ reduces the prefactors for the computational effort for the contraction of each of the $4F$ expectation values $\expval{O}{V_i}$  to $16 C_1$ for $\chi^3$ and $64 C_2$ for $\chi^2d$ as compared to $36C_1$ and $216C_2$ for the unbalanced optimal restriction with bond dimension $(2,2,3)$.

\end{appendix}

\end{document}